\documentclass[a4paper,11pt]{article}
\usepackage{amssymb,amsmath,amsthm}
\usepackage{marvosym}
\usepackage{graphicx}

\usepackage{todonotes, comment} 

\usepackage{amsfonts}
\usepackage{latexsym}
\usepackage{color}
\usepackage[english]{babel}
\usepackage{hyperref}

\usepackage{float}

\numberwithin{equation}{section}

\newtheorem{lemma}{Lemma}[section]
\newtheorem{theorem}[lemma]{Theorem}

\newtheorem{corollary}[lemma]{Corollary}
\newtheorem{remark}[lemma]{Remark}

\newcommand{\cK}{{\mathcal K}}

\newcommand{\rep}[2]{\underbrace{#1,\ldots,#1}_{#2\text{ times}}}

\newcommand{\R}{\mathbb R}

\newcommand{\N}{\mathbb N}

\newcommand{\Tr}{\operatorname{Tr}}

\definecolor{awesome}{rgb}{1.0, 0.13, 0.32}
\definecolor{darkgr}{rgb}{0.0, 0.62, 0.42}
\definecolor{cyan}{rgb}{0.0, 0.72, 0.92}

\def\cH{{\mathcal{H}}}

\def\cI{{\mathcal{I}}}
\def\QS{{\mathcal{Q}\kern-0.3pt\mathcal{S}}}

\def\cE{{\mathcal{E}}}

\def\cO{{\mathcal{O}}}

\def\cR{{\mathcal{R}}}

\def\Fdue{\null\,\null_2\kern-1pt F_{1}\null}

\begin{document}

\title{Long time stability of low--frequency  packets of modes at finite
  total energy
  }

\date{}


\author{ Dario Bambusi\footnote{Dipartimento di Matematica ``Federigo Enriques'', Universit\`a degli Studi di Milano, Via Saldini 50, 20133
Milano, Italy 
 \textit{Email: } \texttt{dario.bambusi@unimi.it}}, and
Antonio Ponno\footnote{Dipartimento di Matematica "T. Levi-Civita", Universit\`a degli Studi di Padova, Via Trieste 63, 35121 Padova, Italy
 \textit{Email: } \texttt{ponno@math.unipd.it}}}

\maketitle

\begin{abstract}
  We consider the FPUT $\alpha$-model with $N$ particles and 
  initial data with small total energy $E$, independent of $N$. In particular, 
  we prove that if the energy is initially given to Fourier modes of
  wave number
  $\pi k/N\leq \theta_*$, then it does not flow to modes of wave number larger than
  $\theta_*$ up to long times of order $E^{-(1-\epsilon/2)}$, $\theta_*$ and $\epsilon$ being small numbers. The proof relies on the closeness 
  of the Toda lattice to the FPUT system. Unlike
  previous approaches, we do not make use of the action-angle variables of the
  Toda system, but we exploit the Flaschka integrals of the latter. Actually, we find a
  suitable recombination of the Flaschka integrals whose quadratic part
  is used to control the packet, the higher order part fulfilling very good estimates.
\end{abstract}

\noindent

{\em Keywords: FPUT $\alpha$-model, Toda lattice, metastability.} 
\medskip

 
\section{Introduction}

The numerical experiment performed by Fermi, Pasta, Ulam and  Tsingou was originally designed to investigate the dynamical
foundations of equilibrium statistical mechanics by studying the
exchange of energy among the normal modes of a weakly nonlinear
particle chain~\cite{FPU55}. The authors considered initial data concentrated
in a packet of low frequency modes, and their data displayed the persistence
of such a concentration within the available computation time. This
is what we call the FPUT phenomenon. Since then, the problem of
understanding the dependence of the latter phenomenon on the number
$N$ of particles, on the total energy $E$ and on the choice of the initial conditions has been the object of a huge
number of both numerical and analytical studies (see e.g.:
\cite{GGMV92,BI05,BCMM15,BCM26,BLP09,BP11,PCSF11,BCP13,BPP18}), but
rigorous results are still quite unsatisfactory. Indeed it has only
been proved that that the FPUT packet persists when the energy $E$
satisfies the condition  $EN^3\ll 1$ \cite{BP06,BM16} {(see also \cite{SW00,GPR21,GP22}),} thus in a regime where $E$ goes to zero as
$N\to\infty$. In the present paper we prove a result ensuring the
persistence of a phenomenon of FPUT type in the regime $E\ll1$,
uniformly with respect to $N$ and for times slightly shorter than
$E^{-1}$. We remark that our result does not survive the thermodynamic
limit ($E/N$ finite, small as $N\to\infty$).

Following
\cite{BCP13,HK08a,HK08b,HK08c,BKP09,BKP15a,BKP15b,BM16,GMMP20}, we take
advantage of the closeness of the FPUT chain to the Toda
lattice. Before introducing our method, we recall the way such a
closeness was exploited in previous works. In particular, in \cite{BM16} a
result on the persistence of the FPUT packet was proved. The idea
there was to construct Birkhoff variables for the Toda lattice, a nonlinear version of the
Fourier modes, and to exploit the fact that the
energy of each Birkhoff variable stays approximately
constant along the FPUT flow for times of the order $(E/N)^{-1}$. The
main limitation of such a result is due to the fact that the Birkhoff variables of the Toda lattice develop a singularity at a distance $\cO(N^{-3/2})$ from the origin in
the energy norm, so that this method only applies to states with
energy going to zero as $N^{-3}$.

In the present paper we overcome such a difficulty avoiding to
construct Birkhoff variables. Instead, we use the Flaschka integrals
$J^{(m)}$ of the Toda lattice \cite{Fla74} in order to control the energy residing in the
Fourier modes. Actually, the idea is to construct a suitable linear combination
of the Flaschka integrals which is useful to control the energy of
the FPUT packet of modes. We consider only the Flaschka integrals $J^{(m)}$ of
even order $m$ which, according to the results of \cite{GMMP20}, have a
quadratic part with a good structure. Then, for any integer $s$, we look for a
linear combination of the first $2s+2$ integrals $J^{(m)}$ displaying a quadratic part
which is essentially a Fourier multiplier by $\sin^{2s}(\theta_k)$,
$\theta_k$ being the normalized wave number of the $k$-th Fourier mode. From a technical point of view, in order to be able to estimate the
drift of such integrals along the FPUT dynamics, we need a precise
estimate of their nonlinear part, and this in turn requires to know the explicit form of the coefficients of the linear combination.

We now briefly describe the proof. First we
remark that, since the Flaschka integrals are given by 
\begin{equation}
 J^{(m)}=\frac1m\Tr (L^m)\ ,
 \label{J2m}
\end{equation}
where $L$ is the Lax matrix of the Toda lattice, a linear combination
of the Flaschka integrals has the form
\begin{equation}
  \label{is}
\cI^{(s)}=\Tr(P_s(L))\ ,
\end{equation}
where $P_s$ is a certain polynomial of degree $2s+2$ in its argument. 
The computation of the quadratic part $\cI_2^{(s)}$ of \eqref{is}, with the
explicit determination of the coefficients of the desired linear combination, and the estimate of the relative remainder, constitute the main technical part of the present work. 
More precisely, we expand $P_s$ on
the basis of Chebichev polynomials and use their properties in order to
determine the coefficients of the expansions and to get a quadratic
part with the desired property described above. It turns out that the
polynomials $P_s$ thus obtained are essentially Gauss hyper geometric functions.
The estimate of the remainder of the expansion is then obtained by
exploiting the main result of \cite{PSS13} and some explicit
integral representations of the hypergeometric functions. 

\vskip10pt

We conclude this introduction by first recalling the papers
\cite{BG93} dealing with the problem of energy sharing in a
  diatomic chain.  
We also mention two further lines of research on
the FPUT problem with initial conditions "in measure". The first one is mainly developed in the series of papers \cite{Car07,CM12,MBC14,Mai19,GPP15,GPP26}, where the dynamics of a chain
of particles was studied in the thermodynamic limit. In such papers
the authors consider the system in thermal equilibrium,
namely with data extracted according to the Gibbs measure, and get
an upper bound on the probability that the fluctuations of the energy
of a packet of modes become large. On the other hand, these results do
not provide any information on the persistence of the FPUT
packet.  The second line consists in using the ideas of wave
turbulence in order to understand the process of destruction of the
wave packet \cite{OVPL15,PBCLO19,OLDC23,Wu25,WOHP25}. Here the
heuristic techniques are very well developed, but the rigorous results
have not yet reached a conclusive stage \cite{Wu25,WOHP25,GLM26}.

\vskip10pt

The paper is organized as follows. In Section~2 we introduce the
periodic FPUT chain, define the energy of a packet of modes and state
the main theorem. In Section~3 we formulate the construction and the
estimates for the recombined Toda integrals. In Section~4
the coefficients of $P_s$ are determined through Chebyshev and Legendre polynomial
identities. In Section~5 we estimate the nonlinear remainder, using
the hypergeometric representation and spectral-shift theory. In Section~6
the FPUT and Toda flows are compared, and the proof of the main
theorem is concluded. The appendices collect a divided-difference formula for
traces of matrix polynomials, a formula for the differential of a
Taylor remainder, the spectral-shift estimate used in the proof, and
the proof of
some technical lemmas.

\medskip

\noindent{\it Declaration of generative AI use.}  During the
development of this work, the authors used OpenAI’s ChatGPT to explore
and refine some intermediate arguments. All mathematical statements
and proofs were independently verified by the authors, who take full
responsibility for the content.

\medskip

\noindent{\it Acknowledgments.} We warmly thank Livio Pizzochero for
several discussions and for suggestions which allowed to improve the
presentation and the statement of the main result. The authors were
partially supported by INdAM-GNFM.

\medskip
\noindent{\it Data Availability statement.} We have no data associated
to our paper.

\section{Main result}\label{main.2}


We consider a chain of $N$ particles with periodic boundary
conditions. Denoting by $(p_j,q_j)_{j=1,...,N}$ their momentum and 
displacement respectively, the
Hamiltonian of the FPUT $\alpha$-model is given, with a suitable
normalization, by 
\begin{equation}
  \label{FPU}
H(p,q)=\sum_{j=1}^{N} \frac{p_j^2+(q_{j+1}-q_j)^2}{2}-
\sum_{j=1}^{N}  \frac{(q_{j+1}-q_j)^3}{6}\ ,
\end{equation}
where $q_{j+N}=q_j$ and $p_{j+N}=p_j$. Denoting by
\begin{equation}
  \label{tetak}
\theta_k:=\frac{\pi k}{N}\ \ (k=0,1,\dots,N-1)
  \end{equation}
the normalized wave number, we define the Fourier variables $\hat q_k,\hat p_k$  by
\begin{align}
  &\hat q_k=\frac{1}{\sqrt{N}}\sum_{j=0}^{N-1}q_je^{-2i j\theta_k}\ \ ;\ \   
  \hat p_k=\frac{1}{\sqrt{N}}\sum_{j=0}^{N-1}p_je^{2i j\theta_k}\ ; \label{fou.1k}
  \\
  &q_j=\frac{1}{\sqrt{N}}\sum_{k=0}^{N-1}\hat q_k e^{2i j\theta_k}\ \ ;\ \ 
  p_j=\frac{1}{\sqrt{N}}\sum_{k=0}^{N-1}\hat p_k e^{-2i j\theta_k}\ , \label{fou.1j}
\end{align}
where $i$ is the imaginary unit. Thus
\begin{align}
 \label{H2}
H_2(p,q):= \sum_{j=0}^{N-1}
\frac{p_j^2+(q_{j+1}-q_j)^2}{2}=\sum_{k=0}^{N-1} E_k\ ;
\\
  \label{Ek}
  E_k:=\frac{|\hat p_k|^2+\omega_k^2|\hat q_k|^2}{2}\ ,\quad
  \omega_k:=2\sin\theta_k\ .
\end{align}

\noindent
The total momentum $P:=\sum_{j=1}^Np_j$ is an integral of motion, so
we can restrict to the surface $P=\sqrt{N}\hat p_0=0$; then, also the surface
$Q=\sum_{j=1}^Nq_j=\sqrt{N}\hat q_0=0$ is invariant and we restrict to such a surface.
Thus, the phase space of the system is the vector subspace of $\R^{N}\times \R^N$
composed by the sequences $(p_j,q_j)_{j=1,...,N}$ fulfilling the
conditions $P=0=Q$. The map that associates to such a sequence its
Fourier coefficients $(\hat p_k,\hat q_k )_{k=1,...,N-1}$ is a
global coordinate system. 

\begin{remark}
{ The potential energy of the Hamiltonian \eqref{FPU} is unbounded
  from below, but displays a local minimum at the origin. As a
  consequence, if $E$ is small enough, the constant energy surface
  $\{H=E\}$ contains an invariant compact connected component
  $\Sigma_E$ surrounding the origin. Thus in particular each solution of the Hamilton equations starting on it is global. One can easily show that the condition on $E$ ensuring the existence of $\Sigma_E$ is $0<E<2/3$.} 
\end{remark}

In order to state our main result we have to define the energy of a
packet of modes. We preliminarily observe that, due to the symmetries
of the Fourier transform \eqref{fou.1k}-\eqref{fou.1j} corresponding
to real initial data, one has $E_k=E_{N-k}$ for any $k=1,\dots,N-1$. Then, we fix a positive $\delta\ll1$ ruling the width of the packet, and, for any $0<\theta<\frac{\pi}{2}-\delta$ we define the \emph{packet} of modes of wave number $\delta$-close to $\theta$ as
\begin{align}
  \label{kappone}
\cK_{\theta}:=\left\{ k\in \{1,...,N-1\}\ :\ \theta_k\in
   [\theta,\theta+\delta]\cup[\pi-\theta-\delta,\pi-\theta]\right\}\ .
\end{align}
Then, the energy of the packet $\cK_\theta$ is given by
\begin{equation}
  \label{Etheta}
\cE_{\theta}:=\sum_{k\in\cK_\theta}E_k\ .
\end{equation}
Our main theorem is the following.
\begin{theorem}
  \label{main.1}
For any $0<\delta\ll1$ there exists a positive constant $C$ with the
following property. For any $s\in\N$, with $2s<N$ and for any $\theta_*\in\left(0,\frac\pi
2 -\delta\right)$, and any $\epsilon\in(0,1)$, if the initial
datum fulfills
\begin{align}
  \label{ini}
&\hat p_k(0)=\hat q_k(0)=0\quad
\text{if}\quad \theta_*<\theta_k<\pi-\theta_* \ ;
\\
  \label{ini.2}
&\cE:=H_2(p(0),q(0))<\frac{1}{C}\min\left\{\sin^{4s/\epsilon}\theta_*;s^{-\frac{2}{1-\epsilon}}\right\}\ ,
\end{align}
then,
\begin{align}
  \label{ss}
\forall \theta\in\left(\theta_*,\frac{\pi}{2}-\delta\right)
\end{align}
one has
\begin{align}
  \label{great}
\frac{\cE_\theta(t)}{\cE}\leq
C\left(\frac{\sin\theta_*}{\sin\theta}\right)^{2s} \ ,\quad |t|\leq
\frac{1}{C\cE^{1-\epsilon/2} }\ .
\end{align}
\end{theorem}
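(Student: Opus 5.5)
The plan is to build an almost-conserved quantity out of the Toda integrals whose quadratic part is, up to harmless factors, the Fourier multiplier $\sum_k \sin^{2s}(\theta_k)E_k$. We then combine its slow drift along the FPUT flow with the localization of the initial datum.

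\textbf{Step 1: the recombined integral.} Write the FPUT Hamiltonian \eqref{FPU} as $H=H_{T}+R$, where $H_T$ is a Toda Hamiltonian with the same quadratic and cubic parts, so that $R$ is of order four in $q_{j+1}-q_j$. Take the Flaschka integrals $J^{(2m)}$ of \eqref{J2m}, $m\le s+1$, and form $\cI^{(s)}=\Tr P_s(L)$ as in \eqref{is}. The coefficients of $P_s$ are chosen through the Chebyshev expansion, using the structure of the quadratic part from \cite{GMMP20}, so that
\begin{equation*}
\cI^{(s)}_2=\sum_k c_s\,\sin^{2s}(\theta_k)\,E_k\,(1+\cO(\text{lower}))\ .
\end{equation*}
It is cleanest to require that the quadratic part is exactly of the form $\sum_k \sin^{2s}\theta_k\, a_k E_k$ with $a_k$ bounded above and below uniformly in $k$. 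Next we must show that the nonlinear remainder $\cI^{(s)}-\cI^{(s)}_2$ is bounded by $C^s\cE^{3/2}$, or at worst by $s^{a}\cE^{3/2}$. This is where the hypergeometric form of $P_s$ and the spectral shift estimate \cite{PSS13} enter: $\Tr P_s(L)-\Tr P_s(L_0)$ is written as $\int P_s'(\lambda)\,\xi(\lambda)\,d\lambda$, and the bound follows from estimates on $P_s'$ in $L^1$ over the spectrum.

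\textbf{Step 2: the drift.} Since $\{\cI^{(s)},H_T\}=0$, we have $\dot\cI^{(s)}=\{\cI^{(s)},R\}$. We bound this by $C\cE^{2}$, possibly with a factor polynomial in $s$, using the energy norm on the invariant component $\Sigma_E$. Hence for $|t|\le (C\cE^{1-\epsilon/2})^{-1}$ the drift is at most $\cE^{1+\epsilon/2}$. Integrating in time then gives
\begin{equation*}
\sum_k\sin^{2s}\theta_k E_k(t)\le \sum_k \sin^{2s}\theta_k E_k(0)+C(\cE^{3/2}+\cE^{1+\epsilon/2})\ .
\end{equation*}

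\textbf{Step 3: conclusion.} By \eqref{ini}, the first term on the right is at most $\sin^{2s}\theta_*\,\cE$. The condition \eqref{ini.2} gives $\cE^{\epsilon/2}\le C^{-\epsilon/2}\sin^{2s}\theta_*$, so the error term is also at most $C\sin^{2s}\theta_*\,\cE$. The condition $\cE< s^{-2/(1-\epsilon)}$ is used to absorb the polynomial-in-$s$ losses coming from Steps 1 and 2. On the left-hand side, $\cE_\theta(t)\sin^{2s}\theta\le \sum_k\sin^{2s}\theta_kE_k(t)$ holds because $\sin\theta_k\ge\sin\theta$ on $\cK_\theta$ for $\theta<\pi/2-\delta$; the $\delta$-margin keeps the $a_k$ bounded below there. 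Dividing by $\sin^{2s}\theta\,\cE$ gives \eqref{great}.

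The main obstacle is Step 1, specifically the explicit coefficients of $P_s$ and the uniform-in-$N$, controlled-in-$s$ bound on the nonlinear part of $\Tr P_s(L)$. For this I would try the spectral shift representation first, combined with integral representations of ${}_2F_1$ to bound $\|P_s'\|$ on $[-2-c\sqrt\cE,\,2+c\sqrt\cE]$.
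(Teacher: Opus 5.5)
Your architecture matches the paper's: recombined Flaschka integrals $\cI^{(s)}=\Tr P_s(L)$, drift computed as $\{\cI^{(s)},H-H_{Toda}\}$, and the final comparison $\sin^{2s}\theta\,\cE_\theta(t)\le\cI^{(s)}_2(t)$ with $\cI^{(s)}_2(0)\le\sin^{2s}\theta_*\,\cE$. The gap is that the estimate everything hinges on is left open, and your hedges about it are incompatible with the statement.

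\textbf{Only a linear loss in $s$ closes.} Hypothesis \eqref{ini.2} absorbs only a loss \emph{linear} in $s$. With $R^2=\cE$ you need, uniformly in $N$,
\[
|\cR^{(s)}|\le sCR^3 ,\qquad |d\cR^{(s)}[Y]|\le sCR^2\|Y\| .
\]
The reason is that $sR^3\le R^{2+\epsilon}$ is equivalent to $\cE\le s^{-2/(1-\epsilon)}$. A bound $C^s\cE^{3/2}$ does not close; that is exactly what the raw $J^{(2m)}$ give, cf.\ Remark~\ref{noi_no}. Neither does $s^a\cE^{3/2}$ with $a>1$.

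The same bookkeeping matters in Step 2. The $O(\cE^2)$ part of the drift, $d\cI^{(s)}_2X_W$ with $W=H-H_{Toda}$, must be independent of $s$. It is, because the multiplier satisfies $\sin^{2s}\theta_k\le1$. All the $s$-dependence must sit in $d\cR^{(s)}X_W=O(s\cE^{5/2})$. A factor $s^a$ in front of $\cE^2$ would leave an error $s^a\cE^{\epsilon/2}$, which \eqref{ini.2} does not bound by $C\sin^{2s}\theta_*$ uniformly in $s$. You also need the bound on $d\cR^{(s)}$, not only on $\cR^{(s)}$, and you do not say how to obtain it.

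\textbf{The proposed mechanism for Step 1 would not give this.} The first-order Krein formula $\Tr P_s(L)-\Tr P_s(L_0)=\int P_s'\,\xi$ controls the whole increment, quadratic part included, not a cubic remainder. Moreover, $\xi$ is controlled only through $\|L-L_0\|_{S^1}$ or the rank of $L-L_0$. Neither is bounded by $\|(p,q)\|$ uniformly in $N$; for $L-L_0=-\mathrm{diag}(p)$ one has $\|L-L_0\|_{S^1}=\sum_j|p_j|\sim\sqrt N\,\|p\|$. So no amount of information on $\|P_s'\|_{L^1}$ yields either the $\cE^{3/2}$ order or $N$-uniformity.

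What works is Theorem~\ref{th.PSS} at order $n=3$:
\begin{itemize}
\item Since $\|X\|_{S^3}\le\|X\|_{HS}\lesssim\|(p,q)\|$, the cubic remainder is bounded by $\sup|P_s'''|\,\|(p,q)\|^3$ uniformly in $N$.
\item The mixed terms are handled at orders $1$ and $2$, using $\|X_3\|_{S^1}\lesssim\sum_j|r_j|^3$.
\item The differential bound follows from the integral formula of Lemma~\ref{C.1} plus polarization.
\end{itemize}
This requires $\sup|P_s^{(k)}|$ for $k\le3$ on a \emph{fixed} neighbourhood $[-2-\delta,2+\delta]$; the margin must be fixed because of the cutoff, so $[-2-c\sqrt\cE,2+c\sqrt\cE]$ is not the right interval. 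The decisive fact is $\sup|P_s'''|\lesssim s$.

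That bound in turn needs the explicit construction you defer:
\begin{itemize}
\item With $B_j'=2T_{2j+1}(x/2)$, the quadratic part of $\Tr B_j(L)$ is the Dirichlet kernel $W_j(\theta_k)$. The average over $k_1$ kills the nonconstant harmonics because $2s<N$.
\item Expanding $\sin^{2s+1}\theta$ in the $\sin((2j+1)\theta)$ fixes the coefficients, giving your factors $a_k\equiv1$ exactly.
\item This yields $P_s'(2y)=A_s\,y\,f(y^2)$ with $f(z)=\int_0^1(1-tz)^s(1-t)^{-1/2}dt$ and $A_s\lesssim\sqrt s$.
\item Hence $|P_s'''|\lesssim A_s\sqrt s\lesssim s$ on $|y|\le1+\delta$.
\end{itemize}
Without this chain, Steps 1--3 do not close under \eqref{ini.2}.
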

\begin{remark}
{ One easily shows that the relation between $\cE$ - the initial value of the quadratic energy - and the total energy $E$ is $\cE=E+O(E^{3/2})$.} 
\end{remark}
\begin{remark}
  \label{decay}
The above theorem shows in particular that the energy in any packet of
modes with wave number $\theta>\theta_*$ decays as a large
power of the ratio $sin\theta_*/\sin\theta$, showing that the energy
resides in the modes with wave number smaller
than $\theta_*$, up to an error of the order $\cE^{\epsilon/2}$. 
\end{remark}
\begin{remark}
  \label{uniformi}
We stress that all the constants appearing in Theorem \ref{main.1} are uniform with respect to 
$N$.
\end{remark}
\begin{remark}
  \label{tempi}
The times of validity of the estimate \eqref{great} are the typical
ones covered by averaging theorems, but we actually do not know if the
result can be obtained by the methods of perturbation theory or if the
actual time of persistence of the FPUT packet is longer than what we
ensure.  
\end{remark}
\begin{remark}
  \label{thermo}
Our proof does not survive the thermodynamic limit. {Apart from the fact that the Hamiltonian \eqref{FPU} is unbounded from below when $E$ is large, even adding a quartic stabilizing term to $H$, in order to get 
results in this limit we guess that the introduction of some
probabilistic tools might be necessary}. We do not have clear ideas on how to implement
such a procedure at present. 
\end{remark}

\section{A linear combination of Toda integrals}

We recall that the Toda lattice with periodic boundary conditions
and $N$ particles 
is the Hamiltonian system with Hamiltonian
\begin{equation}
  \label{Toda}
H_{Toda}=\frac 1 2\sum_{j=1}^Np_j^2+\sum_{j=1}^Ne^{-r_j},
  \end{equation}
where
$$
r_j:=q_{j+1}-q_j\ .
$$
The Hamilton equations
of motion associated to $H_{Toda}$ admit a Lax pair formulation which,
when expressed in the Flaschka coordinates, $ b_j=-p_j\mbox{ and }
a_j=e^{-\frac 12r_j} $, takes the form \cite{Fla74} with
\begin{equation}\label{laxx}
\dot L = [B,L]
\end{equation}
 where the $N \times N$ matrix $L $ is given by
 \begin{equation}
   \label{L}
L=L(a,b):= \begin{pmatrix} 
   b_1 &a_1 &0&\ldots&0&a_N \\ 
   a_1 &b_2 &a_2&\ldots &0&0 \\ 
   0 &a_2 &b_3&\ldots&0&0\\
   \vdots&&&&&\vdots\\
   0&\ldots &&\ldots &b_{N-1}&a_{N-1}\\
      a_N &0&\vdots &&a_{N-1}&b_N 
      \end{pmatrix}\ ,
   \end{equation}
   while the form of the skewsymmetric matrix $B$ plays no role in
   our developments.  In the following we will consider $L=L(p,q)$ as
a function of the variables $p,q$ through the corresponding Flaschka
variables. In particular all the eigenvalues of the Lax matrix are
invariant under the flow of the Toda Hamiltonian \eqref{Toda}.

The classical first integrals of the Toda chain, first introduced by Flaschka \cite{Fla74}, are the quantities \eqref{J2m}.

For any $s\geq1$ we want to construct a linear combination
\begin{equation}
  \label{linear.comb}
\cI^{(s)}(p,q):=\sum_{l=0}^sc_{s,l}{J^{(2l+2)}(p,q)}\ ,
  \end{equation}
of the
integrals $J^{(2l)}$ with even order with the property that the
quadratic part of $\cI_2^{(s)}$ is essentially a Fourier multiplier by
$\sin^{2s}\theta_k$.

In order to give a precise statement, in the space of the states with zero average $P=0=Q$, we define the
energy norm by
\begin{equation}
  \label{norma}
\left\|(p,q)\right\|^2:=\sum_{j=0}^{N-1}\frac{p_j^2+r_j^2}{2}=\sum_{k=1}^{N-1}E_k \ .
\end{equation}

The following theorem is our main technical result.
\begin{theorem}
  \label{main}
For any $s\in\N$ with $2s<N$,  there exists a linear combination $\cI^{(s)}$ of the form 
\eqref{linear.comb} with the property that its Taylor expansion at
$p=q=0$ restricted to the hypersurface $P=0=Q$ has the form
\begin{equation}
  \label{espando}
\cI^{(s)}(p,q)=\cI^{(s)}_0+\cI^{(s)}_2(p,q)+\cR^{(s)}(p,q)\ ,
\end{equation}
where $\cI^{(s)}_0$ is independent of $(p,q)$, the quadratic part
$\cI^{(s)}_2$ is given by
\begin{equation}
  \label{parte.lin}
\cI^{(s)}_2(p,q)={\sum_{k=1}^{N-1} \sin^{2s}\theta_k\, E_k}
  \end{equation}
and there exist a positive $C$ and a positive $R_*$ independent of
$s$ and $N$ s.t. for $R< R_*$ one has 
\begin{align}
  \label{sti.resto}
\sup_{\|(p,q)\| \leq R}\left|\cR^{(s)}(p,q)\right|&< s\, C\,  R^{3}\ .
\\
\label{sti.resto.dr}
\sup_{\|(p,q)\| \leq R}\left|d\cR^{(s)}(p,q)[Y]\right|&< s\, C\,
R^2\left\| Y\right\|\ .
\end{align}
The coefficients $c_{s,l}$ are explicitly given by
\begin{equation}
  \label{coeff}
c_{s,l}=\frac{(-1)^l\, (2s+1)!\, l!}{4^s s!(s-l)!(2l+1)!}\ .
  \end{equation}
\end{theorem}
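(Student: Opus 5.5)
We work in the Flaschka variables and expand $L=L_0+V$ around the equilibrium $a_j=1$, $b_j=0$. Here $L_0=S+S^{-1}$, with $S$ the cyclic shift, and $V=V_1+V_2+\dots$. The linear piece $V_1$ has diagonal $-p_j$ and off-diagonal entries $-r_j/2$; the quadratic piece $V_2$ is off-diagonal with entries $r_j^2/8$. Since $L_0$ is circulant, it is diagonalized by the Fourier basis with eigenvalues $\lambda_k=2\cos(2\theta_k)$.

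\textbf{Step 1 (structure of the quadratic part).} For a polynomial $P$, the second-order Taylor term of $\Tr P(L)$ is
\[
\Tr\bigl(P'(L_0)V_2\bigr)+\tfrac12\sum_{k,k'}P[\lambda_k,\lambda_{k'}]\,|(V_1)_{kk'}|^2 ,
\]
where $P[\cdot,\cdot]$ is the divided difference (the appendix formula for traces of matrix polynomials). The first-order term $\Tr(P'(L_0)V_1)$ is a multiple of $\sum_j p_j$ and $\sum_j r_j$, so it vanishes on $P=0=Q$. In Fourier variables $V_1$ couples $k$ to $k\pm m$ with weights $\hat p_m$ and $\hat r_m$, and $|\hat r_m|^2=\omega_m^2|\hat q_m|^2$. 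For $P$ even, the cross terms between $p$ and $r$ cancel. Hence the quadratic part is a Fourier multiplier $\sum_m f_P(\theta_m)E_m$, where $f_P(\theta)$ is an average over $\varphi$ of divided differences of $P$ at $2\cos\varphi$ and $2\cos(\varphi+2\theta)$, plus the $V_2$ contribution.

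Because $\deg P=2s+2<N+2$, i.e.\ $2s<N$, the Riemann sums over $k$ are trigonometric polynomials of degree below $N$. They therefore coincide with the corresponding integrals, and $f_P$ is independent of $N$.

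\textbf{Step 2 (choice of coefficients).} We write $P_s(2\cos x)$ in the Chebyshev basis, $P_s(2\cos x)=\sum_n \beta_n\cos(nx)$ with $n$ even. For the single Chebyshev term $2T_n(L/2)$ the multiplier is explicit: the computation reduces to $\frac{d}{dx}\cos(nx)$ divided differences, consistent with \cite{GMMP20}, and gives a trigonometric polynomial in $\theta$ of the type $\sin(n\theta)\cdot(\dots)/\sin\theta$. Imposing $f_{P_s}(\theta)=\sin^{2s}\theta$ then yields a triangular linear system for the $\beta_n$. We solve it using the expansion of $\sin^{2s}$ in cosines, i.e.\ binomial coefficients, and then convert back to the monomial basis $L^{2l+2}$ via Chebyshev and Legendre identities. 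This should produce \eqref{coeff}; once \eqref{coeff} is known, the verification is a finite binomial identity. The constant $\cI_0^{(s)}=\Tr P_s(L_0)$ is independent of $(p,q)$.

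\textbf{Step 3 (remainder; main obstacle).} The remainder $\cR^{(s)}$ collects two contributions. The first is the third-order Taylor remainder of $\Tr P_s(L_0+V)$. The second consists of the terms involving $V_j$, $j\ge2$, beyond those already counted in Step 1; these are smooth with $|V|\lesssim R$ in Hilbert–Schmidt norm, since $\|V\|_{HS}^2\lesssim\|(p,q)\|^2$ for $R<R_*$.

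For the genuine third-order term we invoke the higher-order spectral shift result of \cite{PSS13}:
\[
\Tr\Bigl(f(L_0+V)-\sum_{j<3}\tfrac{1}{j!}D^jf(L_0)[V,\dots]\Bigr)=\int f'''(\lambda)\,\eta_3(\lambda)\,d\lambda ,
\]
with $\|\eta_3\|_{L^1}\le c\|V\|_{S_3}^3\le c\|V\|_{op}\|V\|_{HS}^2\le cR^3$. The support of $\eta_3$ lies in a fixed neighbourhood $[-2-cR,2+cR]$ of the spectrum. Everything then reduces to the bound $\sup_{|\lambda|\le 2+cR_*}|P_s'''(\lambda)|\le Cs$, uniformly in $s$.

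This is where I expect the difficulty. The polynomial $P_s$ has degree $2s+2$, and naive Markov-type bounds give growth like $s^6$. Moreover, outside $[-2,2]$ Chebyshev-type polynomials grow exponentially in $s$. My plan is to identify $P_s$ with a Gauss hypergeometric ${}_2F_1$ and use an Euler-type integral representation. This should show that $P_s'''$ is, up to a factor $O(s)$, an average of a kernel bounded by one on a small complex neighbourhood of $[-2,2]$. The size of that neighbourhood, hence $R_*$, would be fixed independently of $s$; if this fails, we fall back on $R\lesssim 1/s$, which the energy condition of Theorem~\ref{main.1} would allow.

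\textbf{Step 4 (differential bound).} Estimate \eqref{sti.resto.dr} follows the same route applied to $d\cR^{(s)}[Y]$, using the appendix formula for the differential of a Taylor remainder. This gives an integral of the same type, with one factor $V$ replaced by $dV[Y]$, whose norm is $\lesssim\|Y\|$.
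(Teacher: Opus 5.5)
You have the paper's architecture: divided differences at $L_0$, a Chebyshev basis with binomial and Legendre identities for the coefficients, \cite{PSS13} plus the Euler integral of ${}_2F_1$ for the remainder, and the Taylor-remainder formula for the differential. There is a genuine gap, however. The estimate that carries the whole nonlinear bound, $\sup_{|x|\le x_0}|P_s^{(k)}(x)|\le Cs$ for $k\le 3$ on a fixed real interval containing $[-2,2]$, is left open, and the mechanism you sketch would not give it.

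Write $P_s'(2y)=A_s\,y\,f(y^2)$ with $f(z)=\int_0^1(1-tz)^s(1-t)^{-1/2}dt$ and $A_s=\frac{2s+1}{4^s}\binom{2s}{s}\le 2\sqrt s$ (Lemma~\ref{gauss.3}). The kernel obeys $|1-tz|\le 1$ only for \emph{real} $0\le z\le 2$, i.e.\ real $|x|\le 2\sqrt2$. On any fixed complex neighbourhood of $[-2,2]$ this fails near $x=0$; for instance $|P_s'(i\eta)|\ge\sqrt2\,A_s\frac{|\eta|}{2}(1+\eta^2/8)^s$ grows exponentially in $s$. Conversely, a fixed real interval $|x|\le x_0<2\sqrt2$ is available for free, so $R_*$ is automatically independent of $s$. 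Your fallback $R\lesssim 1/s$ is therefore unnecessary, and it would not prove \eqref{sti.resto} for all $R<R_*$ anyway.

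More seriously, boundedness of the kernel alone gives, after two differentiations under the integral, only $|P_s'''|\lesssim A_s s^2\sim s^{5/2}$. The missing idea is the quantitative decay
\[
\int_0^1\frac{t^k\,|1-tz|^{s-k}}{\sqrt{1-t}}\,dt\le\frac{C}{(1+sz)^{k+1}},\qquad 0\le z\le z_0<2,\quad k=0,1,2.
\]
You get it by splitting at $t=\frac12$, using $1-tz\le e^{-tz}$ on $[0,\frac12]$ and $|1-tz|\le e^{-cz}$ on $[\frac12,1]$. Combine this with the powers of $y$ produced by the chain rule and the scaling $u=\sqrt s\,|y|$. For $F(y)=yf(y^2)$ this gives $|F|\lesssim s^{-1/2}$, $|F'|\lesssim1$ and $|F''|\lesssim\sqrt s$, hence $|P_s'|\lesssim 1$, $|P_s''|\lesssim\sqrt s$ and $|P_s'''|\lesssim s$.

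If you prefer your complex-variable idea, it needs a strip of width $\rho\asymp s^{-1/2}$ rather than a fixed one. On such a strip one checks $|P_s'|=O(1)$, and Cauchy's estimate then supplies exactly the factor $\rho^{-2}\sim s$. These same bounds are also what controls your ``second contribution'' ($\Tr(P_s'(L_0)V_{\ge3})$, $d^2\Tr P_s(L_0)[V_1,V_2]$, \dots), which you dismissed as merely smooth without an $s$-uniform constant.

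Two further points are asserted rather than proved.

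In Step~1, the absence of $p$--$r$ cross terms does not justify your ``hence''. The coefficients of $|\hat p_m|^2$ and of $|\hat r_m|^2=\omega_m^2|\hat q_m|^2$ come from different pieces: the diagonal of $V_1$ on one side, and its off-diagonal part plus $\Tr(P'(L_0)V_2)$ on the other. Their equality is exactly Theorem~\ref{GMP} (the same matrix $A^{(2m)}$ for $p$ and $r$), which the paper invokes precisely so as to compute only the $p$-part. Alternatively, it follows from $\{\cI_2,H_2\}=0$, the quadratic-order part of $\{\cI,H_{Toda}\}=0$, together with translation invariance.

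In Step~2 you never compute the multiplier of a basis element, so \eqref{coeff} is not derived. The paper's core identity takes $B_j'=2T_{2j+1}(x/2)$, $q=2\theta_{k_1}$, $\eta=2\theta_k$. The divided difference then equals $W_j(\eta/2)\,W_j(q+\eta/2)$, where $W_j(\theta)=\sin((2j+1)\theta)/\sin\theta=1+2\sum_{n=1}^{j}\cos 2n\theta$. Since $2n\le 2s<N$, the average over $k_1$ of the second factor is exactly $1$, so the multiplier is $W_j(\theta_k)$. The expansion of $\sin^{2s+1}\theta$ then gives $\gamma_{s,j}$, and \eqref{tjAS} together with the Legendre identity of Lemma~\ref{Ssl} converts these to $c_{s,l}$.

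In your even basis the same computation gives the multiplier $m\sin^2(m\theta)/\sin^2\theta$ for $T_{2m}(L/2)$. Your system is therefore diagonal in $\cos 2m\theta$ and solvable, but it, and the conversion to monomials, still has to be carried out.
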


\begin{remark}
  \label{no1}
The fact that the part of $J^{(2m)}$ linear in $p,q$ vanishes on the
surface $P=0=Q$ was proved in \cite{GMMP20}.
\end{remark}

\begin{remark}
  \label{noi_no}
In the case of the original integrals $J^{(2m)}$ (cf. \eqref{J2m}) our
method would produce an estimate of the remainder of the form
\eqref{sti.resto} and \eqref{sti.resto.dr} whose r.h.s. would contain
$2^{2m} C$ in place of $sC$. This is due to the fact that the spectrum
of $L(0,0)$ asymptotically fills $[-2,2]$.  This estimate would be
useless to our purposes.
\end{remark}

\section{Computation of the coefficients $c_{s,l}$.}\label{icoeff} 

First we recall that the structure of the integrals \eqref{J2m} and their expansion at
the equilibrium was studied in detail in
\cite{GMMP20}, where, in particular  the following theorem was proved.

\begin{theorem}
  \label{GMP}[Lemma 3.7 of \cite{GMMP20}]
Let us consider the Toda integral $J^{(2m)}$ and its Taylor expansion
at $p =q = 0$: it takes the form
\begin{equation}
\label{Jm.exp}
J^{(2m)}(p, q) = J^{(2m)}_0 + J^{(2m)}_1(p,q)+J^{(2m)}_2(p,q)+  J^{(2m)}_{\geq 3}(p,q),
\end{equation}
where 
$ J^{(2m)}_0=
c \in \R $, $J^{(2m)}_1(p,q)$ is linear in $(p,q)$ and vanishes on the
surface $P=Q=0$.
$J^{(2m)}_2(p,q)$ is quadratic in $(p,q)$ and has the form 
\begin{equation}
\label{Jm2.struct}
J^{(2m)}_2(p,q) = 
  \sum_{i,j}p_i  A^{(2m)}_{i,j} p_j + \sum_{i,j} r_i  A^{(2m)}_{i.j} r_j 
\end{equation}
with $A^{(2m)}$ a  symmetric  $N \times N$  matrix.
\end{theorem}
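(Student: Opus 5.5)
We prove Theorem~\ref{GMP} by expanding $\Tr(L^{2m})$ about the equilibrium Lax matrix $L_0:=L(0,0)$, where $b_j=0$ and $a_j=1$; this is the periodic discrete Laplacian-type matrix $S+S^{-1}$, with $S$ the cyclic shift. We set $b_j=-p_j$ and $a_j=e^{-r_j/2}=1-\tfrac12 r_j+\tfrac18 r_j^2+O(r^3)$, and write
\[
L=L_0+L_1+L_2+O(3),
\]
where $L_1$ has diagonal $-p$ and off-diagonal entries $-\tfrac12 r_j$, and $L_2$ is off-diagonal with entries $\tfrac18 r_j^2$. Then $J_0^{(2m)}=\frac1{2m}\Tr L_0^{2m}$ is a constant.

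For the linear part we use cyclicity of the trace:
\[
J_1^{(2m)}=\Tr\bigl(L_0^{2m-1}L_1\bigr).
\]
Since $L_0$ is circulant, so is $L_0^{2m-1}$. Its diagonal entries are therefore a common constant $\alpha$, and its $(j,j+1)$ entries are a common constant $\beta$. Hence $J_1^{(2m)}=-\alpha\sum_j p_j-\beta\sum_j r_j$. Here $\sum_j p_j=P$, and $\sum_j r_j=0$ identically by periodicity. (In fact $\alpha=0$ because $L_0^{2m-1}$ is an odd polynomial in the bipartite-type structure when $N$ is even, but this is not needed.) So $J_1^{(2m)}$ vanishes on $P=Q=0$.

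For the quadratic part,
\[
J_2^{(2m)}=\Tr\bigl(L_0^{2m-1}L_2\bigr)+\frac12\sum_{a+b=2m-2}\Tr\bigl(L_0^{a}L_1L_0^{b}L_1\bigr).
\]
The first term equals $\tfrac{\beta}{4}\sum_j r_j^2$, which is of the desired form with $A$ a multiple of the identity. For the second term we split $L_1=D_p+O_r$, with $D_p$ the diagonal part in $p$ and $O_r$ the off-diagonal part in $r$. This produces three kinds of contributions: $pp$, $rr$, and cross terms $pr$. The $pp$ terms are $\sum_{i,j}p_ip_j\,M_{ij}$, where $M_{ij}=\sum_{a+b}(L_0^a)_{ij}(L_0^b)_{ji}$ is a symmetric circulant matrix. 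The $rr$ terms are quadratic forms in $r$ whose kernels are built from $(L_0^a)_{j+1,i}(L_0^b)_{i+1,j}$ and similar products; these are again circulant and symmetric.

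The main points are two. First, the cross terms $pr$ must vanish. Second, the $rr$ kernel must coincide with the $pp$ kernel $M$ up to adjustments that vanish when summed against $r$ with $\sum_j r_j=0$.

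For the cross terms we use the parity $L_0\mapsto -L_0$ symmetry. Conjugation by $U=\mathrm{diag}((-1)^j)$ maps $L_0$ to $-L_0$ when $N$ is even, and maps the off-diagonal perturbation $O_r$ to $-O_r$ while fixing $D_p$. A $pr$ term $\Tr(L_0^aD_pL_0^bO_r)$ therefore changes sign by the factor $(-1)^{a+b+1}=-1$ because $a+b=2m-2$ is even, so it vanishes. For odd $N$ we argue instead in Fourier variables: each term is a translation-invariant bilinear form, and reflection symmetry $j\mapsto -j$ combined with the explicit trigonometric kernels forces it to vanish.

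To identify the $rr$ and $pp$ kernels we pass to Fourier variables. There $L_0$ has symbol $2\cos(2\theta)$, the diagonal perturbation $D_p$ has symbol $1$, and the perturbation $O_r$ has symbol $\cos$ of the half-shift, up to a phase $e^{i\theta}$ that is absorbed by the change $r\leftrightarrow$ shifted $q$-differences. This leads us to expect both kernels to equal the same function $\frac{1}{2}\sum_{a+b}\lambda_\ell^a\lambda_{\ell'}^b$ evaluated at $\ell-\ell'=k$, possibly after symmetrisation.

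The key step, which we expect to be the hardest, is this Fourier comparison showing that the $rr$ and $pp$ kernels agree. We would carry it out by writing $O_r=\tfrac12(S R+R S^{-1})$, with $R=\mathrm{diag}(r)$, and commuting $S$ through powers of $L_0$, since $S$ commutes with $L_0$. This reduces the $rr$ terms to traces of the $pp$ type with $R$ in place of $D_p$, plus shift-conjugated terms that reproduce the same kernel. We then set $A^{(2m)}$ to be the resulting circulant matrix, and symmetry of $A^{(2m)}$ follows from cyclicity of the trace.
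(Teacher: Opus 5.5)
The paper gives no proof of this statement; it quotes it from \cite{GMMP20}. So your argument has to stand on its own. Several parts are correct: the linear part, the second-order formula for $J^{(2m)}_2$, the value $\Tr(L_0^{2m-1}L_2)=\frac{\beta}{4}\sum_j r_j^2$, and the $U$-conjugation argument that removes the $p$--$r$ cross terms for even $N$.

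The gap is in the comparison of the $rr$ and $pp$ sectors, which is exactly the step you flagged. The statement requires the \emph{same} matrix $A^{(2m)}$ in front of $p$ and of $r$. So the term $\frac{\beta}{4}\sum_j r_j^2$, which lives only in the $r$-sector, is not ``of the desired form'' by itself; it is precisely what must be compensated. Accordingly, your claim that the $rr$ part of $\frac12\sum_{a+b=2m-2}\Tr(L_0^aL_1L_0^bL_1)$ has the same kernel as the $pp$ part (up to terms vanishing when $\sum_j r_j=0$) is false. If both your claims held, you would get $A_r=A_p+\frac{\beta}{4}I$, contradicting the statement. Already for $m=2$ and $N$ large: $\beta=3$, the $pp$ part is $\sum_j(2p_j^2+p_jp_{j+1})$, and the $rr$ part of the $L_1L_1$ terms is $\sum_j(\frac54 r_j^2+r_jr_{j+1})$. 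Only the $L_2$ contribution $\frac34\sum_j r_j^2$ makes the two forms agree.

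Your heuristic that the symbol of $O_r$ differs from that of $D_p$ only by an absorbable phase is where it breaks. In the eigenbasis of $L_0$ one finds $|(O_r)_{k_1k_2}|^2=\cos^2(\theta_{k_1}+\theta_{k_2})\,\frac1N|\hat r_{k_1-k_2}|^2$ (with $\hat r$ normalized like $\hat p$), whereas $|(D_p)_{k_1k_2}|^2=\frac1N|\hat p_{k_1-k_2}|^2$. The weight $\cos^2(\theta_{k_1}+\theta_{k_2})$ depends on $k_1+k_2$, not only on $k_1-k_2$. Equivalently, the $S^{\pm2}$ terms you obtain by commuting $S$ through powers of $L_0$ do not reproduce the $pp$ kernel.

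What is missing is the computation of this deficit. Write $\cos^2=1-\sin^2$ and put $x=2\theta_{k_1}$, $y=2\theta_{k_1+k}$. Since $2(\cos y-\cos x)=-4\sin\frac{x+y}{2}\sin\frac{y-x}{2}$, the factor $\sin^2\frac{x+y}{2}$ cancels the singular part of the divided difference. What remains is $-\bigl(P'(2\cos y)-P'(2\cos x)\bigr)\sin\frac{x+y}{2}\big/\bigl(4\sin\frac{y-x}{2}\bigr)$, with $P'(\lambda)=\lambda^{2m-1}$. Expanding $P'(2\cos x)=\sum_n c_n\cos(nx)$ and averaging over $x$ at fixed $y-x$ gives $\beta/2$, independent of $k$. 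Hence for every $k\neq0$ the $L_1L_1$ kernel in $r$ equals the $pp$ kernel minus $\beta/4$. The first-order term $\Tr(L_0^{2m-1}L_2)$ cancels exactly this constant, and this cancellation is the real content of the lemma.

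A second, smaller gap is your odd-$N$ treatment of the cross terms, which cannot be completed without a restriction relating $m$ and $N$. For $N=3$ one computes $(L^5)_{11}|_{b=0}=a_1a_2a_3(4a_1^2+2a_2^2+4a_3^2)$. Since $a_1a_2a_3\equiv1$, this produces the term $-2\sum_j p_jr_{j+1}$ in $J^{(6)}$, which does not vanish on $P=Q=0$. Under a hypothesis such as $2m<N$, no closed walk of length $2m$ winds around the ring. Then every monomial of $\Tr(L^{2m})$ contains an even number of diagonal entries $b_j$, which kills both the $p$-linear part and all $p$--$r$ cross terms for either parity of $N$. The borderline case $2m=N+1$ is also harmless, since the only winding walks then contribute $b_i\prod_j a_j=b_i$.
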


The first idea on how to construct the desired linear combination is that, due to
equation \eqref{J2m} and to the linearity of the trace operator,
the integral \eqref{linear.comb} that we are looking for is given by
\begin{equation}
  \label{traccia}
  \cI^{(s)}=\Tr\left(\sum_{l=0}^sc_{s,l}\frac{L^{2l+2}}{2l+2}\right)
  =\Tr\left(P_s(L)\right)\ ,
\end{equation}
where $$P_s(x)=\sum_{l=0}^sc_{s,l}\frac{x^{2l+2}}{2l+2}$$ is a polynomial to be determined.

Denote by $L_0:=L(0,0)$ the Lax matrix at the equilibrium,
namely 
   \[ L_0 := \begin{pmatrix} 
   0 &1 &0&\ldots&0&1 \\ 
   1 &0 &1&\ldots &0&0 \\ 
   0 &1 &0&\ldots&0&0\\
   \vdots&&&&&\vdots\\
   0&\ldots &&\ldots &0&1\\
      1 &0&\vdots &&1&0 
      \end{pmatrix}\ .
       \]      

Given an arbitrary polynomial $P$, consider the quantity
$ \cI:=\Tr(P(L))$; we want to compute its quadratic part. To this end
we have to compute a formula for the
quadratic part of the expansion of $\Tr(P(L))$ at $L_0$. Before looking
for such a formula, we observe that, due to Theorem
\ref{GMP}, $J^{(2m)}$ does not have a linear part, while $J^{(2m)}_2$
is a quadratic form in $p$ plus an {\it identical} quadratic form in
$r$. Therefore, it is enough to compute the quadratic form in $p$.
We compute such a quadratic form working with the basis of the eigenvectors
of $L_0$. In order to diagonalize $L_0$, we remark that $L_0=S+S^*$
where $S$ is the shift operator defined by $(Sq)_j:=q_{j+1}$ and
$S^*=S^{-1}$ its adjoint, so that we have the following elementary lemma
\begin{lemma}
  \label{L0}
 The $N$ (complex, orthonormal) eigenvectors ${\bf e}_k$ and eigenvalues $\lambda_k$ of $L_0$ 
  ($k=0,\dots,N-1$) are given by
  \begin{align}
&({\bf e}_k)_j:=\frac{1}{\sqrt{N}}e^{-i j2\theta_k}\ ;\ \ (j=0,\dots,N-1) \label{eigenvec}\\
&\lambda_k=2\cos(2\theta_k) \label{eigenval}\ .
    \end{align}
\end{lemma}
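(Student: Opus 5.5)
We verify directly that $({\bf e}_k)_j=\frac{1}{\sqrt N}e^{-ij2\theta_k}$ is an eigenvector of $L_0=S+S^*$. First we show it is an eigenvector of the shift $S$; the eigenvalue of $L_0$ then follows at once. Orthonormality is the standard orthogonality of discrete Fourier characters.

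\emph{Periodicity and eigenvector property.} Note that $2\theta_k N=2\pi k$, so $e^{-i(j+N)2\theta_k}=e^{-ij2\theta_k}$. Hence ${\bf e}_k$ is a well-defined $N$-periodic sequence, consistent with the periodic boundary conditions encoded in the corner entries of $L_0$. Then
\[
(S{\bf e}_k)_j=({\bf e}_k)_{j+1}=e^{-2i\theta_k}({\bf e}_k)_j ,\qquad
(S^*{\bf e}_k)_j=({\bf e}_k)_{j-1}=e^{2i\theta_k}({\bf e}_k)_j ,
\]
where the indices are taken mod $N$ and the periodicity is used at the boundary rows. Summing the two relations gives
\[
L_0{\bf e}_k=\bigl(e^{2i\theta_k}+e^{-2i\theta_k}\bigr){\bf e}_k=2\cos(2\theta_k)\,{\bf e}_k ,
\]
which is \eqref{eigenval}.

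\emph{Orthonormality.} Compute
\[
\langle {\bf e}_k,{\bf e}_h\rangle=\frac1N\sum_{j=0}^{N-1}e^{2\pi i j(h-k)/N} .
\]
This is a geometric sum with ratio $\omega=e^{2\pi i(h-k)/N}$. For $k,h\in\{0,\dots,N-1\}$ we have $\omega=1$ if and only if $h=k$, in which case the sum equals $N$. Otherwise $\omega^N=1$ and $\omega\neq1$, so the sum is $(1-\omega^N)/(1-\omega)=0$. Therefore $\langle {\bf e}_k,{\bf e}_h\rangle=\delta_{kh}$.

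\emph{Completeness.} The $N$ vectors ${\bf e}_k$ are orthonormal, hence linearly independent, so they form an orthonormal basis of $\mathbb C^N$ diagonalizing $L_0$. The eigenvalues are therefore exactly $\lambda_k=2\cos(2\theta_k)$. Some of them are repeated ($\lambda_k=\lambda_{N-k}$), but this does not affect the statement.

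No step presents a genuine obstacle. The only point needing care is the use of periodicity at the corner entries of $L_0$, which is exactly where $2N\theta_k\in2\pi\mathbb Z$ enters.
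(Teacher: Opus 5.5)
Your proof is correct and follows the paper's route. The paper only notes that $L_0=S+S^*$, with $S$ the periodic shift, and calls the lemma elementary; you carry out exactly that computation and add the standard geometric-sum check of orthonormality. (With the paper's convention that $\langle\cdot,\cdot\rangle$ is conjugate-linear in the first slot, as in \eqref{espandoX}, the exponent in your orthonormality sum is $j(k-h)$ rather than $j(h-k)$, which changes nothing.)
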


\noindent
The following lemma provides a formula for the term of $\cI$ quadratic in $p$, in the case of a generic polynomial. 

\begin{lemma}
  \label{parte.quad}
Let $P$ be a polynomial and let $\cI:=\Tr(P(L))$, then the part of
$\cI$ quadratic in $p$ is given by
\begin{equation}
  \label{parte.2.p}
{\frac{1}{2}}\sum_{k=0}^{N-1}\left(\frac{1}{N}\sum_{k_1=0}^{N-1}
\frac{P'(\lambda_{k_1+k})-P'(\lambda_{k_1})}{\lambda_{k_1+k}-\lambda_{k_1}}\right)
{ \left|\hat p_k\right|^2}\ ,
\end{equation}
with the convention that when $\lambda_{k_1}=\lambda_{k_1+k}$ the fraction
is $P''(\lambda_{k_1})$.
\end{lemma}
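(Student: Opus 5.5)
The plan is to set $q=0$, so that $a_j=1$ and $L=L_0+V$ with $V=-\mathrm{diag}(p_1,\dots,p_N)$. I will then expand $\Tr(P(L_0+V))$ to second order in $V$ using the standard second-order perturbation formula for traces of matrix functions (the divided-difference formula collected in the appendix). By linearity it suffices to treat a monomial $P(x)=x^n$. For a monomial,
\begin{equation*}
\Tr\big((L_0+V)^n\big)_{2}=\frac n2\sum_{a+b=n-2}\Tr\big(L_0^aVL_0^bV\big),
\end{equation*}
where cyclicity of the trace is used to collect the terms. In the eigenbasis $\{{\bf e}_k\}$ of Lemma \ref{L0} this becomes
\begin{equation*}
\frac n2\sum_{k_1,k_2}\sum_{a+b=n-2}\lambda_{k_1}^a\lambda_{k_2}^b\,|V_{k_1k_2}|^2,
\end{equation*}
with $V_{k_1k_2}=\langle {\bf e}_{k_1},V{\bf e}_{k_2}\rangle$. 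The inner sum is $\frac1n\,\frac{(x^n)'(\lambda_{k_1})-(x^n)'(\lambda_{k_2})}{\lambda_{k_1}-\lambda_{k_2}}$, which is the first divided difference of $P'$. Extending by linearity gives the general form
\begin{equation*}
\frac12\sum_{k_1,k_2}P'[\lambda_{k_1},\lambda_{k_2}]\,|V_{k_1k_2}|^2 .
\end{equation*}
When $\lambda_{k_1}=\lambda_{k_2}$ the divided difference is read as $P''(\lambda_{k_1})$, which is consistent because $\sum_{a+b=n-2}\lambda^{n-2}=(n-1)\lambda^{n-2}$.

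Next I compute $V_{k_1k_2}$. By \eqref{eigenvec},
\begin{equation*}
V_{k_1k_2}=-\frac1N\sum_j p_j\,e^{2ij(\theta_{k_1}-\theta_{k_2})}.
\end{equation*}
Comparing with \eqref{fou.1k}, this equals $-\frac1{\sqrt N}\hat p_{k}$ with $k=k_1-k_2$ mod $N$. Setting $k_2=k_1+k$ therefore gives $|V_{k_1k_2}|^2=\frac1N|\hat p_k|^2$. Reindexing the double sum over $(k_1,k)$, with indices mod $N$ (legitimate since $\lambda_k$ is $N$-periodic in $k$), yields exactly \eqref{parte.2.p}.

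The main care point is to make sure that setting $q=0$ is legitimate for extracting the part quadratic in $p$. It is, because the Taylor expansion in $(p,q)$ separates by degree, so the $p$-quadratic part is obtained with $q=0$ and is unaffected by mixed terms. One should also note that the $k=0$ term involves $|\hat p_0|^2$, which vanishes on $P=0$ but is harmless in the formula. The remaining steps are routine.
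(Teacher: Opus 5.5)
Your proposal is correct and follows the paper's proof: the paper applies the divided-difference formula of Lemma \ref{divided.1} with $A=L_0$ and $X=-\mathrm{diag}(p_1,\dots,p_N)$, computes $X_{k_1k_2}=-\frac{1}{\sqrt N}\hat p_{k_1-k_2}$ and reindexes, whereas you re-derive that formula for monomials via cyclicity. Your $\frac n2$ symmetrization is right and gives the factor $\frac12$, i.e. $\frac12F''(0)$, directly. Your switch from $k=k_1-k_2$ to $k_2=k_1+k$ flips the sign of $k$, but this is harmless because $\hat p_{-k}=\overline{\hat p_k}$ for real $p$.
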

\proof We apply Lemma \ref{divided.1} of the Appendix \ref{divided}
with $A=L_0$ and {$X=-\text{diag}(p_1,...,p_N)$. Here $X$ represent
  the part of $L-L_0$ which depends only on $p$: in general one would
  have $L-L_0=X(p)+Y(q)$, but we focus only on $X$ for the reason
  explained above. The matrix elements of $X_{k_1,k_2}$ in the basis
  \eqref{eigenvec} are
\begin{align}
  \label{espandoX}
X_{k_1,k_2}=&\langle {\bf e}_{k_1},X{\bf e}_{k_2}\rangle=\sum_{j,l=0}^{N-1}
\overline{({\bf e}_{k_1})_j}(-\delta_{j,l}p_l)({\bf e}_{k_2})_l= \\
=&-\frac{1}{N}\sum_{j=0}^{N-1}p_je^{i j2(\theta_{k_1}-\theta_{k_2})}=
-\frac{1}{\sqrt{N}}\hat p_{k_1-k_2}\ .
\end{align}
Since the $p_j$'s are real, it follows that 
$$
X_{k,k'}X_{k',k}=\frac{1}{N}\hat p_{k-k'}\hat p_{k'-k}=
\frac{1}{N}\left|\hat p_{k-k'}\right|^2\ ,
$$
and equation \eqref{divided.2} immediately gives the result \eqref{parte.2.p}.} \qed

Now, in order to get an explicit expression for the bracket in \eqref{parte.2.p}, it is convenient to expand $P$ on the basis of the Chebyshev polynomials of
the first kind, which are defined as the polynomials $T_n(x) $ such
that (see \cite{AS64} Eq. 22.3.15)
\begin{equation}
  \label{tche}
T_n(\cos\alpha)=\cos(n\alpha)\ .
\end{equation}
More precisely, writing
\begin{equation}
  \label{Pj}
P(x)=\sum_{j=0}^s\gamma_{j} B_j(x)\ ,
\end{equation}
with $B_j$ defined by
\begin{equation}
  \label{Bj}
B'_{j}(x)=2T_{2j+1}\left(\frac{x}{2}\right) \ ,
\end{equation}
we compute the contribution of \eqref{Bj} to \eqref{parte.2.p}.
\begin{lemma}
  \label{cotri.Bj}Assume $2s<N$, then
  the quadratic part of $\Tr(B_j(L))$ is
  \begin{equation}
    \label{qua.bj}
{\sum_{k=0}^{N-1}W_j(\theta_k)E_k}\ ,
  \end{equation}
  where
  \begin{equation}
    \label{wj}
    W_j(\theta)=\frac{\sin(2j+1)\theta}{\sin\theta}
    =1+2\sum_{n=1}^j\cos(2n\theta)\ .
  \end{equation}
Therefore, the quadratic part of $\Tr(P(L))$ with $P=\sum_{j=0}^s
  \gamma_jB_j $ is given by
  \begin{equation}
    \label{qua.bj.1}
{\sum_{k=0}^{N-1}\left[\sum_{j=0}^s\gamma_jW_j(\theta_k)\right] E_k}.
    \end{equation}
\end{lemma}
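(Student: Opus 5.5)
Apply Lemma \ref{parte.quad} with $P=B_j$, so that $P'=B_j'$ and $P''=B_j''$. Set $x=\lambda_k/2=\cos(2\theta_k)$, so that $B_j'(\lambda_k)=2T_{2j+1}(\cos 2\theta_k)=2\cos((2j+1)2\theta_k)$. The divided difference in \eqref{parte.2.p} then becomes an explicit trigonometric expression, and the plan is to evaluate the average over $k_1$ exactly.

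To avoid dividing by $\lambda_{k_1+k}-\lambda_{k_1}$, I will use the polynomial identity
\begin{equation*}
\frac{T_n(y)-T_n(x)}{y-x}=2\sum{}' T_a(x)U_b(y),
\end{equation*}
or, more conveniently in trigonometric form with $x=\cos\alpha$, $y=\cos\beta$,
\begin{equation*}
\frac{\cos n\beta-\cos n\alpha}{\cos\beta-\cos\alpha}
=\frac{\sin\frac{n(\beta+\alpha)}{2}\,\sin\frac{n(\beta-\alpha)}{2}}{\sin\frac{\beta+\alpha}{2}\,\sin\frac{\beta-\alpha}{2}} .
\end{equation*}
Each factor is a Dirichlet-type kernel. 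For $n=2j+1$ odd, $\sin(n\phi)/\sin\phi=\sum_{|m|\le j}e^{2im\phi}$. With $\alpha=2\theta_{k_1}$ and $\beta=2\theta_{k_1+k}$ we get $(\beta\pm\alpha)/2=\theta_{2k_1+k}$ and $\theta_k$, respectively. The divided difference is therefore a finite double sum
\begin{equation*}
\sum_{|m|,|m'|\le j}e^{2im\theta_{2k_1+k}}e^{2im'\theta_k},
\end{equation*}
up to the factor $2\cdot\frac{1}{2}=1$ coming from $B_j'=2T$ and $\lambda=2x$. This is a polynomial identity, so it automatically covers the convention case $\lambda_{k_1}=\lambda_{k_1+k}$.

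Next I average over $k_1$. We have $\frac1N\sum_{k_1}e^{2im\theta_{2k_1+k}}=e^{2im\theta_k}\frac1N\sum_{k_1}e^{4\pi i m k_1/N}$, and this vanishes unless $N$ divides $2m$. Since $|2m|\le 2j\le 2s<N$, only $m=0$ survives. This is precisely where the hypothesis $2s<N$ enters, and it is the one step needing care, namely checking that no aliasing terms appear. The average is then $\sum_{|m'|\le j}e^{2im'\theta_k}=W_j(\theta_k)$, which gives both expressions in \eqref{wj}.

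Plugging into \eqref{parte.2.p}, the $p$-quadratic part is $\frac12\sum_k W_j(\theta_k)|\hat p_k|^2$. By Theorem \ref{GMP} the $r$-part carries the identical matrix $A^{(2m)}$, and this extends to $B_j$ since $B_j$ is even (as $B_j'$ is odd, up to the additive constant, which is irrelevant) and hence a combination of the $J^{(2m)}$. So the $r$-part is $\frac12\sum_k W_j(\theta_k)|\hat r_k|^2$, and $|\hat r_k|^2=\omega_k^2|\hat q_k|^2$ since $\hat r_k=(e^{2i\theta_k}-1)\hat q_k$. Summing gives $\sum_k W_j(\theta_k)E_k$. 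The final formula \eqref{qua.bj.1} then follows by linearity of the trace and of taking quadratic parts.
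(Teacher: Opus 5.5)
Your proposal is correct and follows the paper's proof: you apply Lemma \ref{parte.quad} to $B_j$, factor the divided difference via the product-to-sum identity into the Dirichlet-kernel product $W_j(\theta_k)\,W_j(\theta_{2k_1+k})$, and use $2j\le 2s<N$ to show that the $k_1$-average keeps only the constant Fourier term of the second factor. The differences are cosmetic (complex exponentials instead of cosines), plus two points you make explicit that the paper leaves implicit: the identity extends by continuity to the degenerate case $\lambda_{k_1}=\lambda_{k_1+k}$, and the $r$-part, via Theorem \ref{GMP} and $|\hat r_k|^2=\omega_k^2|\hat q_k|^2$, turns $\frac12\sum_k W_j(\theta_k)|\hat p_k|^2$ into $\sum_k W_j(\theta_k)E_k$.
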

\proof We preliminarily remark that the second equality in \eqref{wj} is
nothing but the identity yielding the Dirichlet kernel (see e.g. \cite{Str92}, p.137).

We come to the proof of \eqref{qua.bj}.
We have to compute the expression \eqref{parte.2.p} with
$P=B_j$ (and $j\leq s$). Let us denote, for simplicity
$$
q=2\theta_{k_1}\ \ ;\ \ \eta:=2\theta_{k}\ ,
$$
so that, by \eqref{tche}
$$
B'_{j}(\lambda_{k_1})=\cos((2j+1)q)\ \ ;\ \ 
B'_{j}(\lambda_{k_1+k})=\cos((2j+1)(q+\eta))\ . 
$$
Then the argument of the sum in bracket of \eqref{parte.2.p} is given by
\begin{align*}
\frac{T_{2j+1}(\cos(q+\eta))-T_{2j+1}(\cos(q))}{\cos(q+\eta)-\cos q}=
\frac{\cos((2j+1)(q+\eta))-\cos((2j+1)q)}{\cos(q+\eta)-\cos
  q}
\\
=\frac{\sin((2j+1)
  \eta/2)}{\sin(\eta/2)}\frac{\sin((2j+1)(q+\eta/2))}{\sin(q+\eta/2)}
=W_j\left(\frac{\eta}{2}\right)W_j\left(q+\frac{\eta}{2}\right)\ .
\end{align*}
Now, the first factor, i.e. $W_j(\eta/2)$, is independent of $k_1$. We claim that the sum
over $k_1$ of the second factor is equal to $N$. This is an immediate
consequence of the second equality in \eqref{wj}. Indeed, if $2n$ is
not a multiple of $N$, the
contribution of each term is
$$
\sum_{k_1=0}^{N-1}\cos\left(2n (2\theta_{k_1}+\theta_k)\right)=0\ .
$$ The condition on $n$ is automatically satisfied since $n\leq j\leq s<N/2$.  It follows that only the constant term of \eqref{wj}
gives a nonzero contribution amounting to $N$, which is in turn canceled by the
$\frac{1}{N}$ factor inside the bracket in \eqref{parte.2.p}. \qed

In view of this lemma we now want to determine coefficients
$\gamma_{s,j}$
s.t.
\begin{equation}
  \label{gamma}
\sum_{j=0}^s\gamma_{s,j}W_j(\theta)=\sin^{2s}(\theta)\ .
  \end{equation}
\begin{lemma}
  \label{gamma.1}
  The coefficients such that \eqref{gamma} holds true are explicitly given by
  \begin{equation}
    \label{gamma.2}
\gamma_{s,j}:=\frac{(-1)^j}{4^s}\begin{pmatrix}
  2s+1\\ s-j\end{pmatrix} \ .
    \end{equation}
\end{lemma}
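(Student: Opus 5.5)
We need $\sum_{j=0}^s\gamma_{s,j}W_j(\theta)=\sin^{2s}\theta$ with $W_j(\theta)=\sin((2j+1)\theta)/\sin\theta$. The plan is to clear the denominator and prove the equivalent identity
\begin{equation*}
\sin^{2s+1}\theta=\sum_{j=0}^s\gamma_{s,j}\sin((2j+1)\theta)\ ,
\end{equation*}
which is the classical Fourier expansion of an odd power of sine. Since the $W_j$ are trigonometric polynomials and $\sin\theta\neq0$ on a dense set, the two identities are equivalent as identities of trigonometric polynomials.

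To establish the expansion, I will write $\sin\theta=(e^{i\theta}-e^{-i\theta})/(2i)$ and expand by the binomial theorem:
\begin{equation*}
\sin^{2s+1}\theta=\frac{1}{(2i)^{2s+1}}\sum_{m=0}^{2s+1}\binom{2s+1}{m}(-1)^{2s+1-m}e^{i(2m-2s-1)\theta}\ .
\end{equation*}
I will then pair the term with index $m=s+1+j$ (frequency $2j+1$) with the term $m'=s-j$ (frequency $-(2j+1)$), for $j=0,\dots,s$. Both carry the binomial coefficient $\binom{2s+1}{s-j}$, and their signs are $(-1)^{s-j}$ and $(-1)^{s+j+1}$ respectively. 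Each pair therefore combines into
\begin{equation*}
(-1)^{s-j}\binom{2s+1}{s-j}\bigl(e^{i(2j+1)\theta}-e^{-i(2j+1)\theta}\bigr)=(-1)^{s-j}\binom{2s+1}{s-j}\,2i\sin((2j+1)\theta)\ .
\end{equation*}

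The remaining step is to collect the prefactor. We have $(2i)^{2s+1}=2^{2s+1}i(-1)^s$, so dividing by it gives
\begin{equation*}
\frac{2i(-1)^{s-j}}{2^{2s+1}i(-1)^s}=\frac{(-1)^j}{4^s}\ ,
\end{equation*}
using $(-1)^{-j}=(-1)^j$. This matches \eqref{gamma.2}. The only delicate point is this sign and index bookkeeping; there is no genuine obstacle.

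As a sanity check, take $s=0$: then $\gamma_{0,0}=1$ and $W_0=1$, as required. For $s=1$ the formula gives $\gamma_{1,0}=3/4$ and $\gamma_{1,1}=-1/4$, and indeed $\frac34-\frac14(1+2\cos2\theta)=\frac12-\frac12\cos2\theta=\sin^2\theta$.
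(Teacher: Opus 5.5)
Your proof is correct and follows the paper's route: you reduce \eqref{gamma} to the odd-power expansion \eqref{fine.seni} of $\sin^{2s+1}\theta$ and derive that expansion from Euler's formula via the binomial theorem. You also write out the index pairing and sign bookkeeping that the paper compresses into ``a lengthy computation,'' and that computation is correct.
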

\proof Observe that, in view of the definition \eqref{wj} of
$W_j$, \eqref{gamma} is equivalent to 
$$
\sum_{j=0}^s\gamma_{s,j}\sin((2j+1)\theta)=\sin^{2s+1}(\theta)\ .
$$
Starting from Euler formula, by a lengthy computation, one gets
\begin{equation}
  \label{fine.seni}
\sin^{2s+1}\theta
=
\frac{1}{2^{2s}}
\sum_{j=0}^{s}
(-1)^j
\binom{2s+1}{s-j}
\sin\bigl((2j+1)\theta\bigr)\ ,
  \end{equation}
which implies  the thesis.
\qed
\begin{corollary}
  \label{prossima}
Let  $P_s$ be the polynomial s.t.
\begin{equation}
  \label{cor.W}
P_s'(x)=\sum_{j=1}^s \frac{(-1)^j}{4^s}\begin{pmatrix}
  2s+1\\ s-j\end{pmatrix} 2T_{2j+1}\left(\frac{x}{2}\right)\ ,
\end{equation}
then the quadratic part of $\Tr(P_s(L))$ is given by
\eqref{parte.lin}.
\end{corollary}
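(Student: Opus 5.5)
The plan is to combine Lemma \ref{cotri.Bj} with Lemma \ref{gamma.1}, using linearity of $P\mapsto \Tr(P(L))$. The sum in \eqref{cor.W} is meant to run over $j=0,\dots,s$. The $j=0$ term, $\gamma_{s,0}W_0=\gamma_{s,0}$, is needed for \eqref{gamma} to hold; indeed at $\theta=\pi/2$ the identity \eqref{fine.seni} requires every index $j$ from $0$ to $s$. So I will read \eqref{cor.W} as $P_s'=\sum_{j=0}^s\gamma_{s,j}B_j'$, with $\gamma_{s,j}$ as in \eqref{gamma.2}.

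First, I fix $P_s=\sum_{j=0}^s\gamma_{s,j}B_j$, where each $B_j$ is a primitive of $2T_{2j+1}(x/2)$. The choice of integration constants only shifts $\Tr(P_s(L))$ by a multiple of $\Tr(\mathbb 1)=N$, which affects $\cI^{(s)}_0$ and nothing else. Since $T_{2j+1}$ is odd, each $B_j$ can be taken even of degree $2j+2$. Hence $P_s$ is an even polynomial of degree $2s+2$ without constant term, and $\Tr(P_s(L))$ is of the form \eqref{traccia}, i.e.\ a linear combination of the $J^{(2l+2)}$.

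Next, I apply the second statement of Lemma \ref{cotri.Bj} (valid since $2s<N$). It gives the quadratic part as $\sum_k\bigl[\sum_j\gamma_{s,j}W_j(\theta_k)\bigr]E_k$, and by Lemma \ref{gamma.1} the bracket equals $\sin^{2s}\theta_k$. The point to check here is that Lemma \ref{cotri.Bj} delivers the full $E_k$, not only its $p$-part. Lemma \ref{parte.quad} computes the $p$-part as $\tfrac12\sum_k W(\theta_k)|\hat p_k|^2$. By Theorem \ref{GMP} the $r$-part is the identical quadratic form in $r$, so it equals $\tfrac12\sum_k W(\theta_k)|\hat r_k|^2$. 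From $r_j=q_{j+1}-q_j$ one gets $|\hat r_k|=|e^{2i\theta_k}-1||\hat q_k|=\omega_k|\hat q_k|$, and summing the two parts gives $\sum_k W(\theta_k)E_k$.

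Finally, I restrict to $P=0=Q$. There the $k=0$ term drops out because $\hat p_0=\hat q_0=0$, and the linear part vanishes by Theorem \ref{GMP}, so the quadratic part is exactly \eqref{parte.lin}. The only genuinely delicate points are the index range in \eqref{cor.W} and the transfer from the $p$-form to the $q$-form via $\omega_k$; everything else is bookkeeping.
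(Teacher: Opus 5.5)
Your proposal is correct and follows the paper's own route: the corollary is obtained by combining Lemma \ref{cotri.Bj} with Lemma \ref{gamma.1} through linearity of $P\mapsto\Tr(P(L))$. You also fill in details the paper leaves implicit. The sum in \eqref{cor.W} must start at $j=0$, consistently with \eqref{gamma.2}, \eqref{fine.seni} and \eqref{poli.d}. In addition, the $r$-part of the quadratic form becomes $\tfrac12\omega_k^2|\hat q_k|^2$ via $|\hat r_k|=\omega_k|\hat q_k|$, which yields the full $E_k$.
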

Finally, we need to expand the Chebyshev polynomials in order to get
$P_s$ on the standard basis of the monomials $x^l$. To such an end, we need
the formula 22.3.6 of \cite{AS64} which, in our case, reads
\begin{equation}
  \label{tjAS}
2T_{2j+1}\left(\frac{x}{2}\right)=(2j+1)\sum_{l=0}^j(-1)^{j-l}
\frac{(j+l)!}{(j-l)!(2l+1)!}x^{2l+1}\ .
  \end{equation}

By Corollary \ref{prossima} one gets
\begin{equation}
  \label{poli.d}
P'_s(x)=\sum_{l=0}^s\frac{(-1)^l}{4^s}\frac{x^{2l+1}}{(2l+1)!}\sum_{j=l}^s\binom{2s+1}{s-j}(2j+1)\frac{(j+l)!}{(j-l)!}\ ,
\end{equation}
so the last step is to simplify the sum over $j$ in the previous
expression. This is done in Lemma \ref{Ssl} below (in the Appendix \ref{E}).

\noindent{\it End of the proof of Eq. \eqref{coeff}.} The conclusion is obtained by inserting the result of Lemma \ref{Ssl} in Eq. \eqref{poli.d}. \qed

\section{Estimate of the nonlinear part}\label{nonlinear}

The key tool to prove the estimate \eqref{sti.resto} is Theorem 1.1 of
\cite{PSS13}, which is recalled in the Appendix \ref{B}. We denote
$T(L):=\Tr(P_s(L))$. In the appendix we
also prove, in Corollary \ref{cor.PSS}, the estimate
\begin{align}
\left|\cR_{T,n}(L_0,X)\right|=\left|\Tr \left( P_s(L_0+X)-\left.\sum_{k=0}^{n-1}\frac{1}{k!}
\frac{d^k}{dt^k}\bigl[P_s(L_0+tX)\bigr]\right|_{t=0}  \right)\right|
\\
\label{gauss.7}
\leq
C_{\delta,n}
\left(\sup_{k=0,...,n}\,\sup_{x\in[-2-\delta,2+\delta]}\left|P^{(k)}_s(x)\right|
\right)
\left\|X\right\|_{S^n}^n\ .
\end{align}
Here $\delta$ is a small parameter,{ $\|\cdot\|_{S^n}$
  denotes the Shatten-Von Neuman norm (see Eq. \eqref{B.1}) which,
  for $s\geq 2$ is controlled by the Hilbert-Schmidt norm} and we will use this formula for
$n=1,2,3$.

We now write a formula for $\cR^{(3)}$. To this end denote
\begin{align}
  \label{inutile}
  X_1:=L(\frac{-r_j}{2},-p_j)\ ,\quad X_2:=L(\frac{r_j^2}{8},0)\ ,
  \\
  X_3:=L(a_j-\left(1-\frac{r_j}{2}+ \frac{r_j^2}{8} \right),0),
\end{align}
so that $X:=L-L_0=X_1+X_2+X_3$. We have

\begin{align}
  \label{espando.1}
  \cI^{(s)}= T(L_0)+dT(L_0)X_1+dT(L_0)X_2+dT(L_0)X_3
  \\
  +\frac{1}{2} d^2T(L_0)(X_1,X_1)+\sum_{k+l\geq 3}
\frac{1}{2} d^2T(L_0)(X_j,X_k)+\cR_{T,3}(L_0,x).
\end{align}
Now, $dT(L_0)X_1$ vanishes due to Theorem \ref{GMP}, while
$$\cI^{(s)}_2=dT(L_0)X_2+\frac{1}{2} d^2T(L_0)(X_1,X_1)$$. The other
non constant terms are the different components of $\cR^{(s)}$.

To estimate the single terms we make use of the following lemma

\begin{lemma}
  \label{HS}
One has
\begin{align}
 \label{HS.1}
 \left\| X_n\right\|_{HS}&\leq C_n\left\|(p,q)\right\|^n\ ,\quad n=1,2
 \\
 \label{HS.2}
 \left\| X_3\right\|_{S^1}&\leq C_3\left\|(p,q)\right\|^n.
\end{align}
\end{lemma}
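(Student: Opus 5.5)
The plan is to compute the matrices $X_n$ entry by entry and use that each is a "Jacobi-type" matrix with at most three nonzero diagonals. By \eqref{L}, $L(a,b)$ has $b_j$ on the diagonal and $a_j$ on the off-diagonal (including the corner entries), each $a_j$ appearing exactly twice. Hence for any such matrix
\begin{equation*}
\left\|L(a,b)\right\|_{HS}^2=\sum_{j}b_j^2+2\sum_j a_j^2 ,
\end{equation*}
and, for the trace norm, the triangle inequality applied to the decomposition $L(a,b)=\mathrm{diag}(b)+\text{(off-diagonal part)}$ gives
\begin{equation*}
\left\|L(a,b)\right\|_{S^1}\le \sum_j|b_j|+2\sum_j|a_j| .
\end{equation*}
For the off-diagonal part, each entry $a_j$ sits in a symmetric rank-two matrix $a_j(E_{j,j+1}+E_{j+1,j})$, whose trace norm is $2|a_j|$.

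With these bounds the estimates become elementary.

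\emph{Case $n=1$.} We have $\|X_1\|_{HS}^2=\sum_j p_j^2+2\sum_j r_j^2/4\le 2\|(p,q)\|^2$ by \eqref{norma}.

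\emph{Case $n=2$.} We get $\|X_2\|_{HS}^2=2\sum_j r_j^4/64\le C\bigl(\sum_j r_j^2\bigr)^2$, using $\ell^4\subset\ell^2$, i.e. $\sum_j r_j^4\le(\sum_j r_j^2)^2$. Note this is uniform in $N$, which is the point.

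\emph{Case $n=3$.} The entries are $a_j-1+r_j/2-r_j^2/8$ with $a_j=e^{-r_j/2}$, i.e. the Taylor remainder of order three of $e^{-x/2}$. For $|r_j|\le 2\|(p,q)\|\le 2R_*$ it is bounded by $C|r_j|^3$. Then
\begin{equation*}
\|X_3\|_{S^1}\le 2C\sum_j|r_j|^3\le 2C\max_j|r_j|\sum_j r_j^2\le C'\|(p,q)\|^3 ,
\end{equation*}
again using $\max_j|r_j|\le\bigl(\sum_j r_j^2\bigr)^{1/2}$. So \eqref{HS.2} holds with $n=3$ (the exponent $n$ in the statement should read $3$).

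The only delicate point is to justify the trace-norm bound for $X_3$ uniformly in $N$, including the periodic corner entries. This is handled by the rank-two decomposition above, so no $N$-dependence arises. I must also ensure the pointwise smallness $|r_j|\le 2\|(p,q)\|$, which follows directly from \eqref{norma} since $r_j^2/2\le\|(p,q)\|^2$. This is needed so that the cubic Taylor bound for the exponential holds with a uniform constant on a bounded interval, which fixes the restriction $\|(p,q)\|\le R_*$.
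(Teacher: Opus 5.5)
Your proof is correct and is essentially the paper's argument: an entrywise Hilbert--Schmidt computation for $X_1$ and $X_2$, and the cubic Taylor bound $|a_j-(1-r_j/2+r_j^2/8)|\le C|r_j|^3$ for $X_3$. You spell out the steps the paper calls immediate: the bound $\sum_j r_j^4\le(\sum_j r_j^2)^2$ (this is the inclusion $\ell^2\subset\ell^4$, not $\ell^4\subset\ell^2$ as you wrote), the rank-two trace-norm estimate, and $\sum_j|r_j|^3\le\max_j|r_j|\sum_j r_j^2$; you are also right that the exponent in \eqref{HS.2} should read $3$.
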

\proof We start by $X_1$: denoting by $X_{1,ij}$ its matrix elements, one has
$$
\left\| X_1\right\|_{HS}^2=\sum_{ij}|X_{1,ij}|^2=\sum_{j=1}^Np_j^2+2\sum_{j=1}^N\frac{r_j^2}{4}\ ,
$$
from which the result immediately follows. $X_2$ is identically
estimated. Concerning $X_3$, we just remark that
$\left|a_j-(1-r_j/2+r_j^2/8)\right|\leq C|r_j|^3$, so that the result
immediately follows. 
\qed

Using this Lemma and \eqref{gauss.7} we immidiately get.
\begin{equation}
  \label{vero.resto}
\left|\cR^{(s)}(p,q)\right|\leq C \left(\sup_{k=0,...,3}\,\sup_{x\in[-2-\delta,2+\delta]}\left|P^{(k)}_s(x)\right|
\right)
\left\|(p,q)\right\|^3\ ,
 \end{equation}
and a similar estimate for its differential.

Then, the key remark to get a control of the supremum of the polynomial, is that
the polynomial $P_s$, whose estimate has to be inserted in
\eqref{gauss.7}, admits a representation in terms of the Gauss
hypergeometric series. First we recall that the Gauss hypergeometric series
$\Fdue$ is defined by (see \cite{AS64} Eq. 15.1.1)
\begin{align}
  \label{gauss}
\Fdue(a,b;c;z):=\sum_{n=0}^{\infty}\frac{{(a)_n}(b)_n}{(c)_n}\frac{z^n}{n!}
\\
\label{gauss.1}
(a)_n:=a(a+1)...(a+n-1)\ ,\quad \forall n\geq 1
\end{align}
and $(a)_0:=1$. In particular, if $a=-s$ or $b=-s$ for some integer $s$
this is a polynomial. The function $\Fdue$ fulfills (see \cite{AS64} eq. 15.3.1)
\begin{equation}
  \label{gauss.2}
\Fdue(a,b;c;z)=\frac{\Gamma(c)}{\Gamma(b)\Gamma(c-b)}\int_0^1\frac{t^{b-1}
  (1-t)^{c-b
-1}}{(1-tz)^a}dt\ ,\quad \Re(c)>\Re (b)>0\ .
\end{equation}

\begin{lemma}
  \label{gauss.3}
  Define the polynomial $Q_s(z)$ by $P'_s(x)=xQ_s(x^2)$, then one has
  \begin{equation}
    \label{gauss.4}
Q_s(z)=A_{s}\Fdue\left(-s,1;\frac{3}{2};\frac{z}{4}\right)\ ,
  \end{equation}
  with
  $$
A_s:=\frac{2s+1}{4^s}\binom{2s}s\ .
$$
Furthermore, one has
\begin{equation}
  \label{gauss.6}
\Fdue\left(-s,1;\frac{3}{2};z\right)=\frac{1}{2}\int_0^1\frac{(1-tz)^s}
      {\sqrt{1-t}}dt\ . 
  \end{equation}
\end{lemma}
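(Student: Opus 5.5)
The plan is to compare coefficients directly, using the explicit formula \eqref{coeff} for $c_{s,l}$ from Theorem \ref{main}, and then to obtain \eqref{gauss.6} from Euler's integral representation \eqref{gauss.2}.

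\emph{Identification of $Q_s$.} Since $P_s(x)=\sum_{l=0}^s c_{s,l}\,x^{2l+2}/(2l+2)$, we have
\[
P_s'(x)=\sum_{l=0}^s c_{s,l}\,x^{2l+1}=x\sum_{l=0}^s c_{s,l}\,(x^2)^l ,
\]
so $Q_s(z)=\sum_{l=0}^s c_{s,l} z^l$. Next we expand the right-hand side of \eqref{gauss.4} using \eqref{gauss}. For $a=-s$ one has $(-s)_l=(-1)^l s!/(s-l)!$ when $l\le s$ and $(-s)_l=0$ when $l>s$, so the series terminates at $l=s$. We also use $(1)_l=l!$ and
\[
\left(\tfrac32\right)_l=\frac{(2l+1)!}{4^l\, l!},
\]
which follows by induction on $l$ (or by writing it as $\prod_{i=1}^{l}(2i+1)/2$). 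Hence the $l$-th term of $\Fdue(-s,1;\frac32;\frac z4)$ equals
\[
\frac{(-1)^l\, s!\, l!}{(s-l)!\,(2l+1)!}\,z^l .
\]
Writing $A_s=\frac{(2s+1)!}{4^s (s!)^2}$, the product $A_s$ times this term is exactly $c_{s,l}z^l$ by \eqref{coeff}. This proves \eqref{gauss.4}.

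\emph{Integral representation.} Apply \eqref{gauss.2} with $a=-s$, $b=1$, $c=3/2$. The condition $\Re c>\Re b>0$ holds. The prefactor is $\Gamma(3/2)/(\Gamma(1)\Gamma(1/2))=1/2$, and the integrand becomes $(1-t)^{-1/2}(1-tz)^{s}$, which gives \eqref{gauss.6}.

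The only delicate point is that \eqref{gauss.2} is usually stated for $z$ in a cut plane, for instance $|z|<1$. Here both sides are polynomials in $z$: the left side because the series terminates, the right side because $(1-tz)^s$ is a polynomial and $\int_0^1 t^n(1-t)^{-1/2}\,dt$ is finite. Two polynomials that agree on an open set agree everywhere, so the identity holds for all $z$.

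Alternatively, one can check \eqref{gauss.6} directly. Expand $(1-tz)^s$ binomially and use
\[
\frac12\int_0^1 t^l(1-t)^{-1/2}\,dt=\frac12 B\!\left(l+1,\tfrac12\right)=\frac{l!}{(3/2)_l},
\]
which reproduces the terms computed above. No real obstacle is expected; the computation of $(3/2)_l$ is the only step requiring care.
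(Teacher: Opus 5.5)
Your proposal is correct and follows essentially the same route as the paper: you match the terminating hypergeometric series term by term against \eqref{coeff}, with the computation of $(3/2)_l=(2l+1)!/(4^l\,l!)$ as the only nontrivial step, and then specialize Euler's integral \eqref{gauss.2} to $a=-s$, $b=1$, $c=3/2$. You also note that both sides of \eqref{gauss.6} are polynomials in $z$, so the identity holds beyond $|z|<1$; the paper leaves this implicit, and it matters because $y^2$ later reaches $(1+\delta)^2$.
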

\proof In order to get \eqref{gauss.4}, one has to compute the various terms in the definition\eqref{gauss} of $\Fdue$ and to insert them in the definition of $Q_s$
using the explicit formula \eqref{coeff} for the coefficients. The
only nontrivial part is the computation of
$$
\left(\frac{3}{2}\right)_n=\frac{3\cdot(3+2)\cdot(3+4)
  \cdot...\cdot(3+2n-2)}{2^n}  =\frac{1}{2^n}\frac{(2n+1)!}{2^nn!}\ .
$$
Finally, \eqref{gauss.6} is just the specification of \eqref{gauss.2}
to our case. 
\qed

Finally we have to estimate the polynomial $P_s$ and its derivatives.
To this end we first remark that, by Stirling formula, one has
\begin{equation}
  \label{stirlieng}
A_s\leq 2\sqrt s\ .
\end{equation}

To proceed, we remark that, since for $R$ small enough the spectrum of $L$
is arbitrarily close to $[-2,2]$, it is enough to work in a segment
containing $[-2,2]$ (see Eq. \eqref{gauss.7}). Let us define
\begin{equation}
 f(z):=\int_0^1 \frac{(1-tz)^s}{\sqrt{1-t}}\,dt,\quad  F(y):=y f(y^2)
 \qquad s\ge 3,
\label{eq:def-f}
\end{equation}
so that $P'_s(2y)=A_s F(y)$. We are interested in estimates for $F$, $F'$ and $F''$ on an interval of the form
\begin{equation*}
-1 -\delta < y < 1+\delta,
\end{equation*}
where $\delta>0$ is small. By the (skew-)symmetry of $F$ it is enough
to get the needed estimates in {$(-\delta,1+\delta)$}.

\begin{lemma}\label{prop:main-estimates}
Let $0<\delta<\sqrt{2}-1$. There exists a constant
$C_\delta>0$, independent of $s$ and $y$, such that for every $s\ge 3$ and every
$y\in(-\delta,1+\delta)$,
\begin{equation}
 |F(y)|\le \frac{C_\delta}{\sqrt{s}}\ ;
\label{eq:F-estimate}
\end{equation}
\begin{equation}
 |F'(y)|\le C_\delta\ ;
\label{eq:Fprime-estimate}
\end{equation}
\begin{equation}
 |F''(y)|\le C_\delta \sqrt{s}\ ;
\label{eq:Fsecond-estimate}
\end{equation}
Moreover, defining
\begin{equation}
 G(y):=\int_0^y F(u)\,du \ ,
\label{eq:G-explicit}
\end{equation}
one has
\begin{equation}
 |G(y)|\le C_\delta \frac{\log(2+s)}{s}\ .
\label{eq:G-estimate}
\end{equation}
\end{lemma}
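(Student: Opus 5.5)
The plan is to work directly with the integral representation \eqref{eq:def-f}. The whole proof rests on one elementary observation: for $y\in(-\delta,1+\delta)$ and $t\in[0,1]$,
\[
|1-ty^2|\le \max\{1,\;y^2-1\}\le 1,
\]
because $y^2-1\le 2\delta+\delta^2<1$ precisely when $\delta<\sqrt2-1$.

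\emph{Splitting the $t$-integral.} I will estimate $f(y^2)$, $f'(y^2)$ and $f''(y^2)$, where differentiation under the integral gives
\[
f^{(k)}(z)=(-1)^k\, s(s-1)\cdots(s-k+1)\int_0^1 t^k\,(1-tz)^{s-k}\,(1-t)^{-1/2}\,dt .
\]
I split the $t$-integral into $[0,1/2]$ and $[1/2,1]$.
\begin{itemize}
\item On $[0,1/2]$ the weight $(1-t)^{-1/2}$ is bounded by $\sqrt2$. Moreover $0\le 1-ty^2\le e^{-ty^2}$, since $ty^2\le (1+\delta)^2/2<1$. Hence
\[
\int_0^{1/2} t^k (1-ty^2)^{s-k}\,dt \le C_k \min\{1,\,(sy^2)^{-k-1}\}
\]
for $s\ge 3$ and $k\le 2$.
\item On $[1/2,1]$ I use the pointwise bound $|1-ty^2|\le \max\{1-y^2/2,\;y^2-1\}$. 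For $|y|\ge c>0$ this quantity is $\le\rho_\delta<1$, so the contribution is exponentially small in $s$ times $\int (1-t)^{-1/2}dt$. For $|y|$ small it is simply bounded by $C s^k\, e^{-c s y^2}$, which is harmless.
\end{itemize}

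\emph{Estimates for $F$, $F'$, $F''$.} Set $u:=sy^2$. Then
\begin{itemize}
\item $|F(y)|=|y\,f(y^2)|\lesssim |y|\min\{1,u^{-1}\}$. Maximizing in $y$ gives $\lesssim s^{-1/2}$, attained at $y\sim s^{-1/2}$, which is \eqref{eq:F-estimate}.
\item $F'(y)=f(y^2)+2y^2f'(y^2)$, and $|y^2 f'(y^2)|\lesssim s y^2\min\{1,u^{-2}\}=u\min\{1,u^{-2}\}\le 1$. This gives \eqref{eq:Fprime-estimate}.
\item $F''(y)=6yf'(y^2)+4y^3f''(y^2)$, and $|y^3 f''(y^2)|\lesssim s^2|y|^3\min\{1,u^{-3}\}=s^{1/2}\,u^{3/2}\min\{1,u^{-3}\}\le s^{1/2}$. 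The term $y f'$ is bounded by $s|y|\min\{1,u^{-2}\}\le s^{1/2}$. This gives \eqref{eq:Fsecond-estimate}.
\end{itemize}
The exponentially small pieces coming from $t$ near $1$, and from the region $t>1/y^2$ when $y>1$, are absorbed into $C_\delta$.

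\emph{The estimate for $G$.} Changing variables $z=u^2$ gives $G(y)=\tfrac12\int_0^{y^2}f(z)\,dz$. Integrating in $z$ under the $t$-integral,
\[
G(y)=\frac{1}{2(s+1)}\int_0^1 \frac{1-(1-ty^2)^{s+1}}{t\sqrt{1-t}}\,dt .
\]
By the key observation $|1-ty^2|\le1$, the numerator is bounded by $\min\{(s+1)ty^2,\;2\}$. Hence the integrand is $\le \min\{(s+1)y^2,\,2/t\}(1-t)^{-1/2}$. The integral is then controlled by $C\bigl(1+\log(2+s)\bigr)$: the region $t\lesssim 1/s$ contributes $O(1)$, the region $1/s\lesssim t\le 1/2$ contributes $\int dt/t\sim\log s$, and $[1/2,1]$ contributes $O(1)$. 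This yields \eqref{eq:G-estimate}.

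\emph{Main obstacle.} The delicate point is uniformity for $y$ slightly larger than $1$, where $1-ty^2$ changes sign near $t=1$, exactly where the weight $(1-t)^{-1/2}$ is singular. The hypothesis $\delta<\sqrt2-1$ is what keeps $|1-ty^2|<1$ uniformly there, so this region contributes only $O(\rho_\delta^{\,s})$. I would check this step most carefully, including the factors $s^k$ coming from the derivatives.
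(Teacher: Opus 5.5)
Your proof is correct. For $F$, $F'$, $F''$ it is essentially the paper's argument. The paper also splits at $t=1/2$ and uses $1-tz\le e^{-tz}$ on $[0,1/2]$. On $[1/2,1]$ it writes $|1-tz|\le\max\{1-z/2,\,z-1\}\le 1-c_a z$, with $a=1+\delta$ and $c_a=\min\{\tfrac12,(2-a^2)/a^2\}$. This merges your two cases ($|y|\ge c$ giving $\rho_\delta^{\,s}$, and $|y|$ small giving $e^{-csy^2}$) into the single bound $|f^{(k)}(z)|\le C_a s^k(1+sz)^{-k-1}$. That bound is equivalent to your $s^k\min\{1,u^{-k-1}\}$, and the rescaling $u=\sqrt{s}|y|$ then proceeds as you describe.

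The genuine difference is the estimate for $G$. The paper does not go back to the $t$-integral. It simply integrates the pointwise bound $|F(u)|\le C_a|u|/(1+su^2)$ already obtained, which gives $|G(y)|\le \frac{C_a}{2s}\log(1+sy^2)$ in one line. You instead perform the $z$-integration exactly, obtaining $G(y)=\frac{1}{2(s+1)}\int_0^1\frac{1-(1-ty^2)^{s+1}}{t\sqrt{1-t}}\,dt$. You then bound the numerator by $\min\{(s+1)ty^2,2\}$ and split at $t\sim 1/s$.

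Your route is longer but self-contained. It also shows that for $G$ only the non-strict bound $|1-ty^2|\le 1$ is needed, since the prefactor $1/(s+1)$ absorbs the region near $t=1$ without any contraction. The paper's route is more economical because it recycles the estimate on $F$, and it directly gives the $y$-dependent form $\log(1+sy^2)/s$.
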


\noindent
The quite technical (and essentially standard) proof of Lemma
\ref{prop:main-estimates} is postponed to Appendix \ref{E}.

\noindent{\it End of the proof of Theorem \ref{main}.} Inserting the
result of Lemma \ref{prop:main-estimates} in the estimate of $P_s$ and
in \eqref{gauss.7}, exploiting Lemma \ref{gauss.3} one gets the
estimates \eqref{sti.resto} and \eqref{sti.resto.dr}. This
concludes the proof of Theorem \ref{main}. \qed 

\section{Estimates on the FPUT dynamics}
\label{FPUest}

Let us consider the FPUT and Toda Hamiltonians, defined in \eqref{FPU} and \eqref{Toda}, respectively. Their difference $W$ is given, up to an irrelevant constant, by
$$
W:=H-H_{Toda}=\sum_{j=0}^{N-1}w(r_j)\ ,
$$
where
\begin{equation}
  \label{wn}
w(r_j)=-\sum_{n\geq4}\frac{1}{n!}(-r_j)^n\ .
\end{equation}
Thus, the Hamiltonian vector field $X_W$ of $W$ has only components on
$p$. In particular its $p_j$ component is given by
\begin{equation}
  \label{Wpn}
\left(X_W(r)\right)_{p_j}=-\frac{\partial W}{\partial q_j}=w'(r_j)-w'(r_{j-1})\ .
\end{equation}
It follows that, provided $R$ is small enough,
\begin{align}
  \label{supwn}
\sup_{\left\|(p,q)\right\|\leq R}{\left\|X_{W}(p,q)\right\|}\leq 4
R^{3}\ .
\end{align}
\begin{lemma}
  \label{main.lemma}
There exist positive constants $R_*$ and $C$, s.t. if the initial
datum ${(p(0),q(0))}$ fulfills
\begin{equation}
  \label{R0}
R:=\left\|{(p(0),q(0))}\right\|< R_*\ ,
\end{equation}
then
\begin{equation}
  \label{scelta.s}
\forall\epsilon\in(0,1)\ \ ,\ \ \forall s\in\N
\ s.t.\ s\leq\min\left\{\frac{ N}{2};\frac{1}{C R^{1-\epsilon}}\right\}\ , 
\end{equation}
one has
\begin{equation}
  \label{main.lemma.1}
\cI^{(s)}_2(t)\leq \cI^{(s)}_2(0)+CR^{2+\epsilon}\ ,\quad \forall
\left|t\right|\leq \frac{1}{CR^{2-\epsilon}}\ .
  \end{equation}
\end{lemma}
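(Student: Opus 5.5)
The plan is to use $\cI^{(s)}$ from Theorem \ref{main} as an almost-conserved quantity for the FPUT flow. It is exactly conserved by the Toda flow, and $H=H_{Toda}+W$, so along FPUT trajectories
\[
\frac{d}{dt}\cI^{(s)}=\{\cI^{(s)},H\}=\{\cI^{(s)},W\}=d\cI^{(s)}[X_W].
\]
We split $\cI^{(s)}=\cI^{(s)}_0+\cI^{(s)}_2+\cR^{(s)}$ and control each piece separately.

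The first step is an a priori bound keeping the orbit in the ball where Theorem \ref{main} applies. Energy conservation for $H$, together with the cubic estimate $|H-H_2|\le C\|(p,q)\|^3$, gives $\|(p,q)(t)\|\le 2R$ for all $t$, provided $R<R_*$ is small. A bootstrap on the connected component $\Sigma_E$ suffices here, and this bound is uniform in $N$ and $s$.

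The second step estimates the drift of the quadratic part. Since $\cI_2^{(s)}=\sum_k \sin^{2s}\theta_k E_k$ is a Fourier multiplier with symbol bounded by $1$, Cauchy--Schwarz in the energy norm gives
\[
|d\cI^{(s)}_2[X_W]|\le C\|(p,q)\|\,\|X_W\|\le C R\cdot R^3
\]
by \eqref{supwn}. Likewise \eqref{sti.resto.dr} gives $|d\cR^{(s)}[X_W]|\le sC R^2\cdot R^3$.

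The third step handles the remainder without differentiating it. Instead of integrating $d\cR^{(s)}[X_W]$, we write
\[
\cI_2^{(s)}(t)-\cI^{(s)}_2(0)=\int_0^t \{\cI^{(s)},W\}\,d\tau-\bigl(\cR^{(s)}(t)-\cR^{(s)}(0)\bigr).
\]
By \eqref{sti.resto}, the boundary term is at most $2sCR^3$. The integral is at most $|t|\,(CR^4+sCR^5)$. With $s\le 1/(CR^{1-\epsilon})$, the boundary term becomes $\le CR^{2+\epsilon}$. For $|t|\le 1/(CR^{2-\epsilon})$, the integral term becomes $\le R^{2+\epsilon}+sR^{3+\epsilon}\le CR^{2+\epsilon}$, using $sR\le R^\epsilon\le1$. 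This yields \eqref{main.lemma.1}.

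The main point needing care is that the only $s$-dependence sits in the linear factor $s$ of \eqref{sti.resto}–\eqref{sti.resto.dr}. We must check that the small-$R$ threshold $R_*$ in Theorem \ref{main} does not depend on $s$, so that the spectrum of $L$ stays within $[-2-\delta,2+\delta]$ uniformly; this is guaranteed by the statement. The bookkeeping of constants (replacing $R$ by $2R$, and absorbing everything into a single $C$) is routine.
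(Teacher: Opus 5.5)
Your proof is correct and follows essentially the same route as the paper. Both arguments use an a priori bound from energy conservation (the paper takes radius $4R$, you take $2R$), then bound $\frac{d}{dt}\cI^{(s)}=d\cI^{(s)}_2[X_W]+d\cR^{(s)}[X_W]$ by $CR^4+CsR^5$, and close with the triangle inequality, where the boundary terms satisfy $|\cR^{(s)}(0)|+|\cR^{(s)}(t)|\le 2sCR^3$; your time integral still contains $d\cR^{(s)}[X_W]$, so the remainder is differentiated after all, exactly as in the paper, and ``without differentiating it'' misdescribes your own step.
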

\proof First we remark that, by conservation of energy and coercivity
of the Hamiltonian, one has that,
provided $R_*$ is small enough
\begin{equation}
  \label{energia}
\left\|(p(t),q(t))\right\|\leq 4 R\ ,\quad \forall t\in\R\ .
\end{equation}
Then, in a ball of radius $4R$ we have
\begin{equation}
  \label{time.der}
\left\{\cI^{(s)},H\right\}=\left\{\cI^{(s)},W\right\}=d\cI^{(s)}X_{W}
=d\cI_2^{(s)}X_{W}+d\cR^{(s)}X_{W}\ ,
  \end{equation}
$\{\ ,\}$ denoting the Poisson bracket.
Now, by Eq. \eqref{parte.lin}, the fact that $\cI^{(s)}_2$ is a
bounded homogeneous polynomial of degree 2, and by Eq. \eqref{supwn}, one
has
\begin{equation}
  \label{poisson}
\sup_{\left\|(p,q)\right\|\leq4R}\left|\cI_2^{s}(p,q)\right|\leq
R^2\ \Longrightarrow
\sup_{\left\|(p,q)\right\|\leq4R}\left|d\cI^{(s)}_2X_W\right|\leq
32 R^4\ .
  \end{equation}
Moreover, by \eqref{sti.resto.dr} one gets
$$
\sup_{\left\|(p,q)\right\|\leq4R}\left|d\cR^{(s)}X_W\right|\leq
CsR^5\ ,
$$
so that, the second of \eqref{scelta.s} being fulfilled, one finally finds
\begin{equation}
  \label{sti.poisson}
\sup_{\left\|(p,q)\right\|\leq4R}\left|\left\{\cI^{(s)};H\right\}\right|\leq
C R^4\ .
  \end{equation}
Now, by means of the triangular inequality, and exploiting the uniformity in time of the bound \eqref{sti.resto}, one gets
\begin{align*}
\cI^{(s)}_2(t)\leq&
\cI^{(s)}_2(0)+\left|\cR^{(s)}(0)\right|+\left|\cR^{(s)}(t)\right|+\left|\cI^{(s)}(t)-
\cI^{(s)}(0)\right|\leq \\
\leq & \cI^{(s)}_2(0)+{2sCR^3}+|t| CR^4\ ,
\end{align*}
from which the thesis immediately follows. \qed

We now specialize the result to the case where the initial datum is
concentrated on the low modes and prove the main result Theorem
\ref{main.1}.

\noindent {\it End of the proof of Theorem \ref{main.1}.}  Fix
$\theta_*<\frac{\pi}{2}-\delta$ and take an initial datum as in
\eqref{ini}, \eqref{ini.2}. Let us define
$$
{\cK_*:=\left\{k\in\{
1,...,N-1\}\ :\ \theta_k\in(0,\theta_*)\cup (\pi-\theta_*,\pi)\right\} }\ .
$$
Then we have
\begin{equation}
  \label{ini.Is}
\cI^{(s)}_2(0)=\sum_{k\in\cK_*}\sin^{2s}(\theta_k)E_k(0)
\leq
\sin^{2s}(\theta_*)
\sum_{k\in\cK_*}E_k(0)=\sin^{2s}(\theta_*)
R^2\ .
  \end{equation}
Thus \eqref{main.lemma.1} gives
$$
\cI^{(s)}_2(t)\leq \sin^{2s}(\theta_*)
R^2+CR^{2+\epsilon}\leq 2R^2\max\{\sin^{2s}(\theta_*),CR^{\epsilon}\}
$$
Therefore, for any $\theta\in(\theta_*,\pi/2-\delta)$, one finds
$$
\cE_\theta(t)=
\sum_{k\in\cK_\theta}E_k(t)=\sum_{k\in\cK_\theta}
\frac{\sin^{2s}(\theta_k)}{\sin^{2s}(\theta_k)} E_k(t)\leq
\frac{1}{\sin^{2s}(\theta)} \cI^{(s)}_2(t)\ ,
$$
from which the thesis immediately follows.
\qed

\section*{Appendices}

\appendix
\section{A divided difference formula}
\label{divided}

Let $A$ be a self-adjoint matrix with eigenbasis $(e_q)$, and let
$\lambda_q$ be the corresponding eigenvalues:
\begin{equation}
 Ae_q=\lambda_qe_q.
\end{equation}

\begin{lemma}
  \label{divided.1}
Let $P$ be a polynomial and let $X$ be self-adjoint with matrix
elements $X_{q,q'}$ on the basis of the eigenvectors of $A$. Define
\begin{equation}
 F(t):=\Tr P(A+tX)\ ,
\end{equation}
then one has
\begin{equation}
  \label{divided.2}
 F''(0)=\sum_{q,q'}
 \frac{P'(\lambda_q)-P'(\lambda_{q'})}{\lambda_q-\lambda_{q'}}X_{qq'}X_{q'q},
\end{equation}
with the convention that if $\lambda_q=\lambda_{q'}$, then
  $$
 \frac{P'(\lambda_q)-P'(\lambda_{q'})}{\lambda_q-\lambda_{q'}}:=P''(\lambda_q)\ .   $$
\end{lemma}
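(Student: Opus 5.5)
The plan is to diagonalize $A$ and reduce the computation to scalar divided differences, working monomial by monomial. By linearity of the trace and of $P\mapsto F''(0)$, it suffices to treat $P(x)=x^m$. Expanding $(A+tX)^m$ in powers of $t$, the $t^2$ coefficient is the sum over all words with exactly two factors equal to $X$ and $m-2$ factors equal to $A$. Hence
\[
F''(0)=2\sum_{a+b+c=m-2}\Tr\left(A^aXA^bXA^c\right)=2\sum_{a+b+c=m-2}\Tr\left(A^{a+c}XA^bX\right),
\]
where the second equality uses cyclicity of the trace.

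Next I write the trace in the eigenbasis $(e_q)$. Then
\[
\Tr(A^{a+c}XA^bX)=\sum_{q,q'}\lambda_q^{a+c}\lambda_{q'}^{b}X_{qq'}X_{q'q}.
\]
For a fixed exponent $b$, the number of pairs $(a,c)$ with $a+c=m-2-b$ is $m-1-b$. Setting $n=a+c$, this gives
\[
F''(0)=2\sum_{q,q'}X_{qq'}X_{q'q}\sum_{n+b=m-2}(n+1)\lambda_q^{n}\lambda_{q'}^{b}.
\]

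The key scalar identity is the symmetrized form of this inner sum. Because $X_{qq'}X_{q'q}$ is symmetric under $q\leftrightarrow q'$, I can replace the kernel $K(\lambda_q,\lambda_{q'})$ by $\tfrac12\bigl(K(\lambda_q,\lambda_{q'})+K(\lambda_{q'},\lambda_q)\bigr)$. The weight is then $\tfrac12\bigl((n+1)+(b+1)\bigr)=\tfrac m2$ on each monomial $\lambda_q^n\lambda_{q'}^b$. It follows that
\[
F''(0)=\sum_{q,q'}X_{qq'}X_{q'q}\, m\sum_{n+b=m-2}\lambda_q^n\lambda_{q'}^b .
\]
For $\lambda_q\neq\lambda_{q'}$, the inner expression equals $m\,\dfrac{\lambda_q^{m-1}-\lambda_{q'}^{m-1}}{\lambda_q-\lambda_{q'}}=\dfrac{P'(\lambda_q)-P'(\lambda_{q'})}{\lambda_q-\lambda_{q'}}$, by the geometric-sum identity for $x^{m-1}-y^{m-1}$. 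For $\lambda_q=\lambda_{q'}$ it equals $m(m-1)\lambda_q^{m-2}=P''(\lambda_q)$, which matches the stated convention. Summing over monomials then gives \eqref{divided.2}.

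The main obstacle is the symmetrization step. It needs the symmetry of $X_{qq'}X_{q'q}$ in $(q,q')$, which holds trivially because scalars commute; self-adjointness is not even required for this. A careful reader should check the count $(n+1)$ and that the factor $2$ from the second derivative is absorbed correctly.
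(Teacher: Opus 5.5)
Your proof is correct: the weight $(n+1)$, its symmetrization to $m/2$, and the absorption of the factor $2$ all check out. Your final kernel $m\sum_{n+b=m-2}\lambda_q^n\lambda_{q'}^b$ is a polynomial in $(\lambda_q,\lambda_{q'})$, so the convention for coinciding eigenvalues is handled automatically, including $q\neq q'$ with $\lambda_q=\lambda_{q'}$.

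Your route differs from the paper's only in the order of operations. The paper applies cyclicity already at first order, writing $F'(t)=\Tr\bigl(P'(A+tX)X\bigr)$, so that $F''(0)=\Tr\bigl((dP'(A)[X])X\bigr)$. The divided difference of $P'$ then comes straight from the matrix elements $\sum_{r=0}^{n-1}\lambda_q^r\lambda_{q'}^{n-1-r}X_{qq'}$ of $\sum_r A^rXA^{n-1-r}$ (for $P'(x)=x^n$), with no symmetrization needed.

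You instead extract the $t^2$ coefficient of $\Tr(A+tX)^m$ directly. This produces the non-symmetric weight $(n+1)$, and you need the $q\leftrightarrow q'$ symmetry of $X_{qq'}X_{q'q}$ to turn it into $m/2$ and recover $P'$.

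The paper's version is a little more economical and makes the appearance of $P'$ transparent: it is just the first-order divided-difference formula applied to $P'$. Yours is a self-contained word-counting argument that shows explicitly that only the symmetric part of the kernel contributes. As you note, neither argument uses self-adjointness of $X$, only an orthonormal eigenbasis of $A$.
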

\proof By the linearity and the cyclicity of the trace one has
\begin{equation}
 F'(t)=\Tr\left(P'(A+tX)X\right).
\end{equation}
Thus
\begin{equation}
\label{F2}
  F''(0)=\Tr\left((dP'(A)[X])X\right),
\end{equation}
where $dP'(A)[X]$ is the differential of the map $A\mapsto P'(A)$ at
$A$ with increment $X$. We now compute such a differential. By
linearity it is enough to consider the case where
$P'=G$ with $G(x)=x^n$, then
\begin{equation}
 dG(A)[X]=\sum_{r=0}^{n-1}A^rXA^{n-1-r}.
\end{equation}
In the eigenbasis of $A$,
\begin{equation}
 (dG(A)[X])_{qq'}=\sum_{r=0}^{n-1}\lambda_q^r\lambda_{q'}^{n-1-r}X_{qq'}=
  \frac{\lambda_q^n-\lambda_{q'}^n}{\lambda_q-\lambda_{q'}} X_{qq'}\ ,
\end{equation}
where we used the equality
$$
 \sum_{r=0}^{n-1}\lambda_q^r\lambda_{q'}^{n-1-r}
 =\frac{\lambda_q^n-\lambda_{q'}^n}{\lambda_q-\lambda_{q'}},
 $$
with the convention that the fraction is $n\lambda_q^{n-1}$ if
$\lambda_q=\lambda_{q'}$. Inserting in \eqref{F2} we get the thesis. 
\qed

\section{A formula for the differential of the remainder}\label{C}

In this appendix we prove a formula for the differential of the
remainder of Taylor expansion which is needed in order to prove
\eqref{sti.resto.dr}. Such a formula immediately follows from
the standard Laplace formula for the remainder of a Taylor expansion,
but we were not able to find it in literature.

\begin{lemma}
  \label{C.1}
Let $\cH$ be a real Banach space, let $U\subset \cH$ be open, and let
\[
f:U\longrightarrow \mathbb{R}
\]
be of class $C^k(U)$. Fix $x_0\in U$ and $X\in\cH$ s.t. $x_0+tX\in
U$, $\forall t\in[0,1]$. For $k\geq 2,$ we define the remainder of order $k$ of the Taylor expansion by
\begin{equation}
\mathcal R_k(x_0,X)
:=
f(x_0+X)
-
\sum_{j=0}^{k-1}\frac{1}{j!}\,
 d^jf(x_0)\bigl[\rep{X}{j}\bigr].
\label{eq:remainder-definition}
\end{equation}
Then, for $Y\in\cH$, one has
\begin{equation}
{
 d_X\mathcal R_k(x_0,X)[Y]
=
\frac{1}{(k-2)!}
\int_0^1
(1-t)^{k-2}
 d^kf(x_0+tX)\bigl[\rep{X}{k-1},Y\bigr] \,dt
}.
\label{eq:main-formula}
\end{equation}
\end{lemma}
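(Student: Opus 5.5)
We would start from the integral (Laplace) form of the Taylor remainder and differentiate it in $X$. Since $f\in C^k(U)$ and $x_0+tX\in U$ for $t\in[0,1]$, with $U$ open, we have
\begin{equation*}
\mathcal R_k(x_0,X)=\frac{1}{(k-1)!}\int_0^1(1-t)^{k-1}\,d^kf(x_0+tX)\bigl[\rep{X}{k}\bigr]\,dt .
\end{equation*}
This is the standard one-variable Taylor formula with integral remainder applied to $g(t):=f(x_0+tX)$, using $g^{(j)}(t)=d^jf(x_0+tX)[X,\dots,X]$. However, differentiating this expression directly in $X$ would require $d^{k+1}f$, because of the dependence of $d^kf(x_0+tX)$ on $X$. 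That derivative is not available for $f\in C^k$. We would therefore avoid it.

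Instead, we would differentiate the definition \eqref{eq:remainder-definition} directly. For fixed $x_0$, $X$ and $Y$, the map $X\mapsto f(x_0+X)$ has differential $df(x_0+X)[Y]$. Each term $\frac1{j!}d^jf(x_0)[X,\dots,X]$ is a symmetric $j$-linear form evaluated on the diagonal, so its differential is $\frac{1}{(j-1)!}d^jf(x_0)[X,\dots,X,Y]$, with $j-1$ copies of $X$. This gives
\begin{equation*}
d_X\mathcal R_k(x_0,X)[Y]=df(x_0+X)[Y]-\sum_{j=1}^{k-1}\frac{1}{(j-1)!}\,d^{j}f(x_0)\bigl[\rep{X}{j-1},Y\bigr].
\end{equation*}

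Now set $h(t):=df(x_0+tX)[Y]$. This function is of class $C^{k-1}$ on $[0,1]$, and $h^{(i)}(t)=d^{i+1}f(x_0+tX)[X,\dots,X,Y]$ with $i$ copies of $X$; this uses the symmetry of the higher differentials, which holds for $C^k$ maps. The right-hand side above is then exactly $h(1)-\sum_{i=0}^{k-2}\frac{1}{i!}h^{(i)}(0)$, the Taylor remainder of order $k-1$ of $h$ at $0$. The integral form of that remainder, which needs only $h\in C^{k-1}$ and $k\ge2$, is
\begin{equation*}
\frac{1}{(k-2)!}\int_0^1(1-t)^{k-2}h^{(k-1)}(t)\,dt .
\end{equation*}
Substituting $h^{(k-1)}(t)=d^kf(x_0+tX)[X,\dots,X,Y]$ gives \eqref{eq:main-formula}.

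There is no real obstacle here. The only points needing care are the differentiability hypotheses, which are why we avoid differentiating under the integral, and the symmetry of $d^jf$. The latter is used both to write the derivative of the diagonal form with $Y$ in the last slot and to identify $h^{(i)}$.
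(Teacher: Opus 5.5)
Your proof is correct and is essentially the paper's argument. The paper sets $g(t):=d_X\mathcal R_k(x_0,tX)[Y]$, which is exactly your $h(t)=df(x_0+tX)[Y]$ minus its Taylor polynomial of degree $k-2$ at $t=0$, and applies the one-variable integral Taylor formula with $g^{(k-1)}(t)=d^kf(x_0+tX)[X,\dots,X,Y]$; applying that formula to $h$ directly, as you do, is the same computation, and your remarks on avoiding $d^{k+1}f$ and on the symmetry of $d^jf$ just make the hypotheses explicit.
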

\proof
For fixed $X,Y\in \cH$, set
\begin{equation*}
g(t):=d_X\mathcal R_k(x_0,tX)[Y].
\end{equation*}
Using \eqref{eq:remainder-definition}, one has
\begin{equation}
g(t)
=
df(x_0+tX)[Y]
-
\sum_{j=1}^{k-1}
\frac{t^{j-1}}{(j-1)!}\,
 d^jf(x_0)\bigl[\rep{X}{j-1},Y\bigr].
\label{eq:g-expanded}
\end{equation}
Since the second term in \eqref{eq:g-expanded} is precisely the Taylor
polynomial of degree $k-2$ at $t=0$ of the first term, one has
\begin{equation}
g(0)=g'(0)=\cdots=g^{(k-2)}(0)=0.
\label{eq:vanishing-jet}
\end{equation}
Moreover, differentiating \eqref{eq:g-expanded} $k-1$ times gives
\begin{equation}
g^{(k-1)}(t)
=
 d^kf(x_0+tX)\bigl[\rep{X}{k-1},Y\bigr].
\label{eq:g-highest-derivative}
\end{equation}
The one-dimensional Taylor formula with integral remainder, together with the vanishing conditions \eqref{eq:vanishing-jet}, yields
\begin{equation}
g(1)
=
\frac{1}{(k-2)!}
\int_0^1
(1-t)^{k-2}g^{(k-1)}(t)\,dt.
\label{eq:taylor-g}
\end{equation}
Substituting \eqref{eq:g-highest-derivative} into \eqref{eq:taylor-g} gives \eqref{eq:main-formula}.
\qed

\section{A Theorem on spectral shift functions}\label{B}

In this appendix we recall a theorem by Potapov, Skripka and Sukochev
\cite{PSS13} (see also \cite{DS09})
which gives the estimate of the remainder of the expansion of the
trace formula that we use in the main text. We slightly simplify the
statement, since we are interested in a finite dimensional situation
(but with unbounded dimension).

We start by recalling the basic notions we need.

Let $\mathcal H$ be a separable Hilbert space, and let $\mathcal B(\mathcal H)$ be the algebra of bounded operators on $\mathcal H$. For a compact operator $A$, let
\begin{equation*}
s_1(A)\ge s_2(A)\ge\cdots\ge 0
\end{equation*}
be the singular values of $A$, namely the eigenvalues of
\begin{equation*}
|A|=(A^*A)^{1/2},
\end{equation*}
counted with multiplicity.

For $1\le p<\infty$, the Schatten--von Neumann space $S^p(\mathcal H)$ is
\begin{equation}
\label{B.1}
  S^p(\mathcal H)
:=
\left\{
 A\in\mathcal B(\mathcal H):\ A\ \text{compact}\ :\ 
\|A\|_{S^p}
:=
\left(
\sum_{j=1}^{\infty}s_j(A)^p
\right)^{\frac{1}{p}}
<\infty
\right\}.
\end{equation}
We remark that the norm can also be defined by 
\begin{equation*}
\|A\|_{S^p}^p
=
\operatorname{Tr}(|A|^p).
\end{equation*}
and that in the case $p=2$ it reduces to the Hilbert--Schmidt norm. 

We also need to define the space of the functions $f\in C^n(\R)$  with
derivatives having bounded
Fourier transform $\hat f$, so we denote 
\begin{equation*}
W_n
:=
\left\{
 f\in C^n(\mathbb R):
 \widehat{f^{(n)}}\in L^1(\mathbb R)
\right\}.
\end{equation*}
which is endowed by the norm
\begin{equation*}
\|f\|_{W_n}
:=
\int_{\mathbb R}
\left|\widehat{f^{(n)}}(s)\right|\,ds.
\end{equation*}

\begin{theorem}\label{th.PSS}[Theorem 1.1 of \cite{PSS13}]
Let $n\in\mathbb{N}$. Let $A$ be a self-adjoint bounded operator and let $X$ be a self-adjoint operator in $S^{n}$. Denote
$$A_t:=A+tX\ , \quad 0\leq t\leq 1\ .$$  Let
$f\in\bigcap_{k=0}^{n}W_k$. Denote
\begin{align}
  \label{}
  F(A+X):=\Tr(f(A+X))
  \\
  \label{C.4}
  \cR_n (A,X):=F(A+X)-\left.\sum_{k=0}^{n-1}\frac{1}{k!}
\frac{d^k}{dt^k}\bigl[F(A_t)\bigr]\right|_{t=0}.
\end{align}
Then there is a unique function
$\eta_n=\eta_{n,A,X}\in L^1(\R)$ depending only on $n,A,X$ such that
\begin{equation*}
  \cR_n (A,X)
=\int_{\mathbb{R}} f^{(n)}(t)\eta_n(t)\,dt
\end{equation*}
and
\begin{equation}
  \label{stima.eta}
\left\|\eta_n \right\|_{L^1}\leq c_n\left\| X\right\|_{S^n}^n,
\end{equation}
with $c_n$ which, for any $n$ is  absolute constant.
\end{theorem}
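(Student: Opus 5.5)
We plan to follow the multiple operator integral (MOI) approach of \cite{PSS13}, combined with a duality argument. First reduce $\cR_n(A,X)$ to $n$-th derivatives along the segment $A_t$. Since $X\in S^n$ and $f\in\bigcap_{k\le n}W_k$, the function $t\mapsto F(A_t)$ is $n$ times differentiable; this uses the Fourier representation $f^{(k)}(x)=\int \widehat{f^{(k)}}(\xi)e^{ix\xi}d\xi$ and Duhamel expansions of $e^{i\xi A_t}$, which make each term trace class once $n$ copies of $X$ appear. Taylor's formula with integral remainder (as in Lemma \ref{C.1}) then gives
\[
\cR_n(A,X)=\frac{1}{(n-1)!}\int_0^1(1-t)^{n-1}\frac{d^n}{dt^n}F(A_t)\,dt .
\]
By the first-derivative formula $\frac{d}{dt}F(A_t)=\Tr(f'(A_t)X)$, iterated as in the divided-difference computation of Lemma \ref{divided.1}, we have
\[
\frac{d^n}{dt^n}F(A_t)=(n-1)!\,\Tr\Bigl(T^{A_t,\dots,A_t}_{(f')^{[n-1]}}(X,\dots,X)\,X\Bigr),
\]
where $T_\phi$ denotes the $(n-1)$-linear MOI with symbol the divided difference $(f')^{[n-1]}$.

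The core step, and the main obstacle, is the dimension-free bound
\begin{equation*}
\Bigl|\Tr\Bigl(T^{A_t,\dots,A_t}_{(f')^{[n-1]}}(X,\dots,X)\,X\Bigr)\Bigr|\le c_n\,\|f^{(n)}\|_{L^\infty}\,\|X\|_{S^n}^n .
\end{equation*}
The naive estimate through the Fourier representation only yields $\|f^{(n)}\|_{W_0}$, which is useless for duality. We plan to prove the sharp bound by induction on $n$, as in \cite{PSS13}. For $n=1$ it is immediate: $|\Tr(f'(A_t)X)|\le\|f'\|_\infty\|X\|_{S^1}$. For the inductive step, the trace identity $\Tr\bigl(T_{(f')^{[n-1]}}(X,\dots,X)X\bigr)=\Tr\bigl(T_{f^{[n]}}(X,\dots,X)\bigr)$ (a symmetrization of the divided difference) lets us rewrite $f^{[n]}$ as an average of $f^{(n)}$ over the simplex. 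The resulting MOI is then bounded on $S^n\times\cdots\times S^n\to S^1$ in terms of $\|f^{(n)}\|_\infty$, using the $S^p$-boundedness of triangular truncations and transference/Schur-multiplier estimates for $1<p<\infty$, with Hölder's inequality $\prod\|X\|_{S^n}$. This is exactly where the constants must be kept independent of the dimension and of $A$.

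Granting that bound, we get $|\cR_n(A,X)|\le \frac{c_n}{n}\|f^{(n)}\|_\infty\|X\|_{S^n}^n$. Now restrict to test functions $\varphi=f^{(n)}\in C_c^\infty(a,b)$, where $[a,b]$ contains the spectra of all $A_t$. Such an $f$ is determined up to a polynomial of degree $<n$, and $\cR_n$ vanishes on polynomials of degree $<n$, so $\varphi\mapsto\cR_n$ is a well-defined linear functional. It is bounded in sup norm, hence by Riesz it is given by a finite signed measure $\mu_n$ with $\|\mu_n\|\le c_n\|X\|_{S^n}^n$.

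It remains to show that $\mu_n$ is absolutely continuous, $\mu_n=\eta_n\,dx$, and to extend the representation from $C_c^\infty$ to all of $\bigcap_k W_k$. For absolute continuity we first treat finite-rank $X$. There $\cR_n$ can be written through the $n=1$ Krein spectral shift function, which is in $L^1$, and its iterated divided differences, so the density is explicit and integrable. For general $X\in S^n$ we approximate by finite-rank $X_m\to X$ in $S^n$. Multilinearity and the above bound show that $\eta_{n,A,X_m}$ is Cauchy in $L^1$ (the difference of the functionals is bounded by $c_n\|X-X_m\|_{S^n}(\|X\|+\|X_m\|)^{n-1}$), so the limit $\eta_n\in L^1$ represents $\mu_n$ and satisfies \eqref{stima.eta}. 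Extension to $f\in\bigcap_k W_k$ follows by approximating $f^{(n)}$ uniformly on $[a,b]$, with $\cR_n$ depending continuously on $f^{(n)}$ through the same bound. Finally, uniqueness holds because an $L^1$ function annihilating all of $C_c^\infty$ vanishes.
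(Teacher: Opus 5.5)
The paper does not prove this statement; it quotes it from \cite{PSS13}. Your outer architecture is reasonable and close to that of \cite{PSS13}: the Taylor remainder along $A_t$, the formula $\frac{d^n}{dt^n}F(A_t)=(n-1)!\,\Tr\bigl(T_{(f')^{[n-1]}}(X,\dots,X)X\bigr)$, the Riesz representation, absolute continuity via trace-class approximation, and uniqueness. There is, however, a genuine gap at the step that carries all the content, the dimension-free bound $\bigl|\Tr\bigl(T_{(f')^{[n-1]}}(X,\dots,X)X\bigr)\bigr|\le c_n\|f^{(n)}\|_\infty\|X\|_{S^n}^n$. It is not established, and the route you sketch for it cannot work.

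You pass to $\Tr\bigl(T_{f^{[n]}}(X,\dots,X)\bigr)$ (the correct identity carries a factor $n$). You then assert that $T_{f^{[n]}}$ is bounded from $S^n\times\cdots\times S^n$ to $S^1$ with norm at most $C\|f^{(n)}\|_\infty$. Such an $S^1$-valued bound is false in general:
\begin{itemize}
\item Its $n=1$ instance, $\|T_{f^{[1]}}(X)\|_{S^1}\le c\|f'\|_\infty\|X\|_{S^1}$ uniformly in $A$, would make every $C^1$ function with bounded derivative uniformly Lipschitz in trace norm. This fails: take smooth approximations of $|x|$.
\item The $n=2$ instance is also known to fail; this is the negative answer to Peller's question on the Koplienko--Neidhardt trace formula.
\end{itemize}
Nor can the tools you invoke deliver it. The $S^p$-boundedness of triangular truncation and the transference/Schur-multiplier estimates hold only for $1<p<\infty$, and they break down exactly at the endpoint $p=1$. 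Moreover, the simplex (Hermite--Genocchi) average of $f^{(n)}$ is precisely what produces the naive $\|\widehat{f^{(n)}}\|_{L^1}$ bound you discarded. Pointwise in the simplex variable, the kernels $f^{(n)}(\sum_i s_i\lambda_i)$ are in general not Schur multipliers on $S^p$, $p\neq 2$, with norm controlled by $\|f^{(n)}\|_\infty$.

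The missing idea is that the trace should be used only to split off one factor, never to reach $S^1$. Keep the form you already wrote, set $Z_t:=T^{A_t,\dots,A_t}_{(f')^{[n-1]}}(X,\dots,X)$, and apply H\"older: $|\Tr(Z_tX)|\le\|Z_t\|_{S^{n/(n-1)}}\|X\|_{S^n}$. For $n\ge2$ the exponent $n/(n-1)$ lies in $(1,2]$, inside the admissible range. What you then need is the multilinear estimate $\|T^{B,\dots,B}_{g^{[m]}}(X_1,\dots,X_m)\|_{S^\alpha}\le c_{\alpha,\alpha_1,\dots,\alpha_m}\|g^{(m)}\|_\infty\prod_{j}\|X_j\|_{S^{\alpha_j}}$ for $1<\alpha,\alpha_j<\infty$ with $\frac1\alpha=\sum_j\frac1{\alpha_j}$. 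Use it with $g=f'$, $m=n-1$, $\alpha=\frac{n}{n-1}$, $\alpha_j=n$. This is the deep operator-theoretic input of \cite{PSS13}, a multilinear extension of the Potapov--Sukochev estimate $\|T_{g^{[1]}}\|_{S^p\to S^p}\le c_p\|g'\|_\infty$ for $1<p<\infty$. Your proposal has to prove or cite it; the trivial $n=1$ trace bound gives no leverage on it.

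With that estimate in hand, the rest of your outline goes through after one correction. $\cR_n(A,X)$ is not multilinear in $X$, because the base operators $A_t$ move with $X$. So the $L^1$-Cauchy property of $\eta_{n,A,X_m}$ should be obtained differently: polarize the bound on $d^nF(B)$ at each base point $B=A+tZ$, then integrate $d_X\cR_n(A,\cdot)$ along the segment from $X_m$ to $X$, exactly as in Lemma~\ref{C.1} and Corollary~\ref{C.33}.
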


\begin{corollary}
  \label{C.33} Let $f$ be as above.
  For all $Y\in HS$ one has
  \begin{equation}
    \label{siffresto}
\left|d\cR_n(A,X)Y\right|\leq C\left\|f^{(n)}\right\|_{L^\infty}\left\|
X\right\|_{HS} ^{n-1}\left\| Y\right\|_{HS}\ .
    \end{equation}
\end{corollary}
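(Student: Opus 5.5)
We need to prove Corollary \ref{C.33}: $|d\cR_n(A,X)Y|\le C\|f^{(n)}\|_{L^\infty}\|X\|_{HS}^{n-1}\|Y\|_{HS}$.

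The approach is to differentiate the remainder in the direction $Y$ via Lemma \ref{C.1}, and then bound the resulting integrand with the same spectral-shift machinery that underlies Theorem \ref{th.PSS}, applied at order $n$ but with one entry of the multilinear form equal to $Y$.

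\emph{Step 1: integral formula for the differential.} Apply Lemma \ref{C.1} with $k=n$ to the function $F(B)=\Tr f(B)$ on the real space of self-adjoint matrices, with base point $x_0=A$. This gives
\begin{equation*}
d_X\cR_n(A,X)[Y]=\frac{1}{(n-2)!}\int_0^1(1-t)^{n-2}\, d^nF(A+tX)\bigl[\rep{X}{n-1},Y\bigr]\,dt .
\end{equation*}
So it suffices to bound $|d^nF(B)[X,\dots,X,Y]|$ uniformly for self-adjoint $B=A+tX$.

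\emph{Step 2: bound on the multilinear derivative.} Set $G(u_1,\dots,u_n)=\Tr f(B+\sum_i u_iZ_i)$ with $Z_1=\dots=Z_{n-1}=X$ and $Z_n=Y$. By polarization, $d^nF(B)[Z_1,\dots,Z_n]$ is a fixed linear combination, with coefficients depending only on $n$, of terms $\frac{d^n}{d\tau^n}\Tr f(B+\tau V_\sigma)|_{\tau=0}$, where $V_\sigma=\sum_i\sigma_iZ_i$ and $\sigma\in\{0,1\}^n$. For each such term, apply Theorem \ref{th.PSS} to $A=B$ and $X=\tau V_\sigma$. The remainder $\cR_n(B,\tau V)$ equals $\frac{\tau^n}{n!}\frac{d^n}{d\tau^n}F|_0+O(\tau^{n+1})$. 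Since $\|\eta_n\|_{L^1}\le c_n\tau^n\|V\|_{S^n}^n$, dividing by $\tau^n$ and letting $\tau\to0$ gives
\begin{equation*}
\Bigl|\tfrac{d^n}{d\tau^n}F(B+\tau V)|_0\Bigr|\le n!\,c_n\|f^{(n)}\|_{L^\infty}\|V\|_{S^n}^n .
\end{equation*}

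\emph{Step 3: the main obstacle.} Plain polarization only yields a bound by $\bigl(\|X\|_{S^n}+\|Y\|_{S^n}\bigr)^n$, not the product structure $\|X\|^{n-1}\|Y\|$. Homogeneity fixes this. The form $d^nF(B)[X,\dots,X,Y]$ is linear in $Y$ and homogeneous of degree $n-1$ in $X$. Replace $X$ by $\lambda X$ and $Y$ by $\mu Y$ with $\lambda=1/\|X\|$ and $\mu=1/\|Y\|$, using the $S^n$ norms. The bound then becomes $C_n\|f^{(n)}\|_\infty\|X\|_{S^n}^{n-1}\|Y\|_{S^n}$. For $n\ge2$ we have $\|\cdot\|_{S^n}\le\|\cdot\|_{HS}$, so this yields \eqref{siffresto}.

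The remaining case is $n=1$, where Lemma \ref{C.1} does not apply. There the statement reads $|d\cR_1 Y|=|\Tr(f'(A+X)Y)|$, which does not follow from an HS bound in finite but unbounded dimension unless a trace norm is used. In practice the estimate is only needed for $n=2,3$, so I would restrict the claim to $n\ge2$, or prove the $n=1$ case with $\|Y\|_{S^1}$ in place of $\|Y\|_{HS}$.

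Finally, integrate the pointwise bound in $t$. The factor $\frac{1}{(n-2)!}\int_0^1(1-t)^{n-2}dt=\frac1{(n-1)!}$ is absorbed into $C$. The requirement $f\in\bigcap_kW_k$ is satisfied for the polynomial $P_s$ after multiplying it by a smooth cutoff equal to $1$ near $[-2-\delta,2+\delta]$, which does not change traces when the spectra stay there.
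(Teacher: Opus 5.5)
Your proof is correct and follows the paper's argument. Lemma~\ref{C.1} writes the differential as an integral of $d^nF(A+tX)[X,\dots,X,Y]$. Theorem~\ref{th.PSS}, applied with a vanishing perturbation $hV$, bounds the diagonal $n$-th derivative. Polarization then finishes. Your homogeneity step that produces the product $\|X\|^{n-1}\|Y\|$, your use of $\|\cdot\|_{S^n}\le\|\cdot\|_{HS}$, and your remark that the argument (and the HS form of the bound) needs $n\ge 2$ make explicit what the paper leaves implicit.
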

\proof Using Eq.\eqref{eq:main-formula} one gets
\begin{equation*}
 d\mathcal R_n(A,X)[Y]
 =\frac{1}{(n-2)!}\int_0^1(1-t)^{n-2}\,
 d^nF(L_0+tX)[\rep{X}{n-1},Y] \,dt,
\end{equation*}
so we have to estimate the norm of $d^nF$. To this end 
fix a self-adjoint operator $Y\in S^n $ and a small real number $h$. Applying
Theorem \ref{th.PSS} with base operator $A_t$ and perturbation $hY$ gives
\begin{equation}
 \left|\cR_{n}(A_t,hY)\right|
 \leq C \left\| f^{(n)}\right\|_{L^\infty} |h|^n\|Y\|_{S^n}^n.
 \label{eq:PSS-at-At}
\end{equation}
Since $f$ is a smooth, the scalar function
\begin{equation*}
 h\longmapsto F(A_t+hY)
\end{equation*}
is smooth and thus the limit 
\begin{equation*}
 \lim_{h\to0}\frac{\cR_{n}(A_t,hY)}{h^n}
 =\frac{1}{n!}d^nF(A_t)\big[\rep{Y}{n}\big]
 \end{equation*}
exists.
Dividing \eqref{eq:PSS-at-At} by $|h|^n$ and passing to the limit yields
\begin{equation*}
 {
 \left|d^nF(A_t)[\rep{Y}{n}]\right|
 \leq C \left\| f\right\|_{L^\infty}\|Y\|_{S^n}^n,
 \qquad 0\leq t\leq1.
 }
\end{equation*}
The factor $n!$ has been absorbed into $C$. Finally the polarization
formula gives the result. 
\qed

Since we are interested in applying the theorem to the case where $f$
is a polynomial, which is an unbounded function, we have first to localize it to fulfill the
assumptions of Theorem \ref{th.PSS}. 

\begin{corollary}
  \label{cor.PSS}
Let $A$ and $X$ and $Y$ be selfadjoint matrices s.t. the spectrum of $A_t$ is
contained in a segment $[a,b]$. Let $\delta>0$ be a positive
parameter. 
Let $P\in C^\infty(\R)$ and denote
\begin{equation}
  \label{polu}
\cR_{P,n}(a,X):=\Tr \left( P(A+X)-\left.\sum_{k=0}^{n-1}\frac{1}{k!}
\frac{d^k}{dt^k}\bigl[P(A_t)\bigr]\right|_{t=0}\right)
  \end{equation}
then one has
\begin{align}
  \label{sti.tra.1}
\left|\cR_{P,n}(a,X)\right| 
\leq
C_\delta
\left(\sup_{l=0,...,,n}\sup_{x\in[a-\delta,b+\delta]}\left|P^{(l)}(x)\right| \right)
\left\|X\right\|_{S^n}^n\ 
\\
\left|d\cR_{P,n}(a,X)Y\right|  
\leq
C_\delta
\left(\sup_{l=0,...,,n}\sup_{x\in[a-\delta,b+\delta]}\left|P^{(l)}(x)\right| \right)
\left\|X\right\|_{HS}^{n-1}\left\| Y\right\|_{HS}\ .
\end{align}
\end{corollary}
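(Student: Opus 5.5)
The plan is to reduce Corollary~\ref{cor.PSS} to Theorem~\ref{th.PSS} and Corollary~\ref{C.33} by a cutoff. Fix a smooth function $\chi_\delta$ with $\chi_\delta\equiv1$ on $[a-\delta/2,b+\delta/2]$, supported in $[a-\delta,b+\delta]$, and with $\|\chi_\delta^{(l)}\|_{L^\infty}\le C_\delta$ for $l\le n+2$. Set $f:=\chi_\delta P$. Since the spectrum of every $A_t$, $t\in[0,1]$, lies in $[a,b]$, we have $f(A_t)=P(A_t)$ for all $t$. Moreover, all $t$-derivatives of $\Tr f(A_t)$ and $\Tr P(A_t)$ coincide, because in finite dimension $t\mapsto \Tr g(A_t)$ depends only on $g$ in a neighbourhood of the spectra. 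More precisely, the derivatives of $\Tr g(A_t)$ are expressed through divided differences of $g$ at the eigenvalues of $A_t$, as in Lemma~\ref{divided.1}. Hence $\cR_{P,n}(A,X)=\cR_n(A,X)$ computed with $f$. The same argument applies to the differential in the direction $Y$: for $|h|$ small, the spectrum of $A_t+hY$ stays within $[a-\delta/2,b+\delta/2]$.

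Next I check that $f\in\bigcap_{k\le n}W_k$ and bound its norms. The function $f^{(k)}$ is smooth and compactly supported in an interval of length $b-a+2\delta$. I use the standard estimate $\|\hat g\|_{L^1}\le C(\|g\|_{L^1}+\|g''\|_{L^1})$, which follows from $|\hat g(\xi)|\le\min(\|g\|_{L^1},\|g''\|_{L^1}/\xi^2)$. This gives $\|f\|_{W_k}\le C_{\delta,a,b}\sup_{l\le k+2}\sup_{[a-\delta,b+\delta]}|P^{(l)}|$ by Leibniz. The main obstacle is that this costs two extra derivatives, while the stated bound involves only $l\le n$.

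To avoid the loss, I would use the sharper form of the PSS result: $\cR_n=\int f^{(n)}\eta_n$ with $\|\eta_n\|_{L^1}\le c_n\|X\|_{S^n}^n$. This is the representation stated in Theorem~\ref{th.PSS}, and the $W_k$ hypothesis is only needed to justify it. I would therefore approximate $f$ in $C^n$ on compacts by functions in $\bigcap W_k$, for instance by mollification. Both sides of the identity are continuous under $C^n$ convergence on $[a-\delta,b+\delta]$: the left side is a finite-dimensional spectral expression, and $\eta_n$ is independent of $f$. Passing to the limit gives
\[
|\cR_{P,n}|\le \|f^{(n)}\|_{L^\infty}c_n\|X\|_{S^n}^n\le C_\delta\Bigl(\sup_{l\le n}\sup_{[a-\delta,b+\delta]}|P^{(l)}|\Bigr)\|X\|_{S^n}^n ,
\]
since $f^{(n)}=\sum_l\binom nl\chi_\delta^{(n-l)}P^{(l)}$.

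For the differential I would apply Corollary~\ref{C.33} to the same $f$. Its proof only uses the $L^\infty$ bound on $f^{(n)}$ obtained above, applied at base points $A_t$, whose spectra lie in $[a,b]$, so the same cutoff works uniformly in $t$. Combined with Lemma~\ref{C.1} and polarization, this yields the second inequality. Finally, $\|X\|_{S^n}\le\|X\|_{HS}$ for $n\ge2$, and for $n=1$ the $S^1$ norm is kept as is.
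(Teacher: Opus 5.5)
Your proposal is correct and is essentially the paper's argument: replace $P$ by the compactly supported $f=\chi P$, and note that the trace remainders and their differentials are unchanged because the relevant spectra lie where $\chi\equiv1$. Then invoke Theorem~\ref{th.PSS} and Corollary~\ref{C.33} for $f$, with $\|f^{(n)}\|_{L^\infty}$ controlled by Leibniz. The mollification step is unnecessary: $f=\chi_\delta P\in C_c^\infty(\R)$ already lies in every $W_k$, so the representation $\cR_n=\int f^{(n)}\eta_n$ applies directly. Also, exactly as in the paper, the differential estimate tacitly requires $n\ge2$, since Lemma~\ref{C.1} needs $k\ge2$.
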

\proof Let $\chi$ be a $C^\infty$ cutoff function which is equal to
$1$ in $[a,b]$ and has support in $[a-\delta,b+\delta]$, then Theorem
\ref{th.PSS} and Corollary \ref{C.33} apply to $f(x)=P(x)\chi(x)$,
giving the result. \qed

\section{Proof of two technical results.}\label{E}

\begin{lemma}
  \label{Ssl}
  Define
\begin{equation}
S_{s,l}:=\sum_{j=l}^{s}
\frac{(2s+1)!(2j+1)(j+l)!}
{(s-j)!(s+j+1)!(j-l)!}.
\end{equation}
Then one has 
\begin{equation}
{S_{s,l}=\frac{l!(2s+1)!}{s!(s-l)!}}.
\end{equation}
\end{lemma}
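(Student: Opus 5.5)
The identity simplifies once we notice that $\frac{(2s+1)!}{(s-j)!\,(s+j+1)!}=\binom{2s+1}{s-j}$. It is then equivalent to
\begin{equation*}
\Sigma_{s,l}:=\sum_{j=l}^{s}\binom{2s+1}{s-j}(2j+1)\frac{(j+l)!}{(j-l)!}=\frac{l!\,(2s+1)!}{s!\,(s-l)!}=l!\,(s+1)_{s+1}\binom{s}{l}\frac{1}{1}\cdot\frac{s!}{s!}\ ,
\end{equation*}
and the right-hand side is more usefully written as $l!\binom{2s+1}{s}\frac{(s+1)!}{(s-l)!}$. I will prove the identity by induction on $s$, for all $0\le l\le s$ simultaneously. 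The base case $l=s$, which covers $s=0$, is immediate: only the term $j=s$ survives and gives $(2s+1)\,(2s)!=(2s+1)!$, matching the right-hand side.

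For the inductive step I would use the double Pascal rule
\begin{equation*}
\binom{2s+3}{s+1-j}=\binom{2s+1}{s+1-j}+2\binom{2s+1}{s-j}+\binom{2s+1}{s-1-j}\ .
\end{equation*}
Shifting the summation index in the first and third pieces expresses $\Sigma_{s+1,l}$ as a combination of sums of the form $\sum_j\binom{2s+1}{s-j}\,g(j\pm1)$, where $g(j):=(2j+1)\frac{(j+l)!}{(j-l)!}$. The point is that $g(j\pm1)-g(j)$ can be rewritten with the falling-factorial identities
\begin{equation*}
\frac{(j+1+l)!}{(j+1-l)!}-\frac{(j+l)!}{(j-l)!}=2l\,\frac{(j+l)!}{(j+1-l)!}\ ,
\end{equation*}
and their analogues. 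This lets one express $g(j+1)+g(j-1)-2g(j)$ through $g$-type terms with parameter $l-1$, in the form $c_l\,(2j+1)\frac{(j+l-1)!}{(j-l+1)!}$ for an explicit constant $c_l$. We then get a recursion of the shape $\Sigma_{s+1,l}=4\Sigma_{s,l}+c_l\,\Sigma_{s,l-1}$, plus boundary corrections, and the inductive hypothesis closes it once one checks that the right-hand sides $R_{s,l}:=\frac{l!(2s+1)!}{s!(s-l)!}$ satisfy the same recursion. That last check is a routine rational-function verification.

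The step I expect to be the main obstacle is handling the boundary terms coming from the lower limit $j=l$ after the index shifts. The summand must be extended by zero for $j<l$, and one has to verify that $(j+l)!/(j-l)!$, read as the product $\prod_{i=-l+1}^{l}(j+i)$, vanishes for $|j+\tfrac12|<l$. This polynomial extension is what makes the shifts legitimate. The same symmetry shows that $g$ is odd under $j\mapsto -1-j$, which is the natural consistency check for the computation.

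If the recursion bookkeeping becomes messy, I would switch to a Zeilberger/WZ argument. One fixes $l$ and looks for a certificate $t_s(j)=r(s,j)\,\binom{2s+1}{s-j}g(j)$, with $r$ rational, such that the summand $F(s,j)$ satisfies
\begin{equation*}
F(s+1,j)-\frac{R_{s+1,l}}{R_{s,l}}F(s,j)=t_s(j+1)-t_s(j)\ .
\end{equation*}
Summing over $j$ then gives $\Sigma_{s+1,l}/R_{s+1,l}=\Sigma_{s,l}/R_{s,l}$, and the base case finishes the proof.

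As a fallback independent of both approaches, the identity also follows from \eqref{tjAS} and Lemma~\ref{gamma.1}. Writing $x=2\sin\phi$, one has $\sum_j\gamma_{s,j}2T_{2j+1}(x/2)=\sum_j\gamma_{s,j}(-1)^j2\sin((2j+1)\phi)$ up to sign, by \eqref{fine.seni} and $\cos((2j+1)(\tfrac\pi2-\phi))=(-1)^j\sin((2j+1)\phi)$. This equals $2\sum_j\frac{1}{4^s}\binom{2s+1}{s-j}\sin((2j+1)\phi)$. Expanding that expression in powers of $\sin\phi$ via the known series of $\sin((2j+1)\phi)/\sin\phi$ in $\sin^2\phi$ reduces the claim to the hypergeometric form in Lemma~\ref{gauss.3}. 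The coefficient identity then follows by direct comparison of Taylor coefficients. Using this route would require re-deriving Lemma~\ref{gauss.3} from \eqref{gauss.6} rather than from \eqref{coeff}, to avoid circularity.
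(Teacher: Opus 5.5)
Your main route, induction on $s$ via the double Pascal rule, is correct and genuinely different from the paper's. The paper gets all $l$ at once from a single polynomial identity. It expands $x^s$ in shifted Legendre polynomials $Q_j(x)=P_j(2x-1)$ and obtains the coefficients $(2j+1)(s!)^2/((s-j)!(s+j+1)!)$ from Rodrigues' formula, orthogonality and the Beta integral. It then differentiates $l$ times at $x=1$, using $Q_j^{(l)}(1)=\frac{(j+l)!}{l!(j-l)!}$.

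Your sketch leaves $c_l$ and the boundary terms unspecified, but they are easy to pin down, and the induction closes.
\begin{itemize}
\item With $g(j)=(2j+1)\frac{(j+l)!}{(j-l)!}$, the second difference $g(j+1)-2g(j)+g(j-1)$ is odd in $j+\frac12$ and has degree $2l-1$.
\item It vanishes at $j=-l+1,\dots,l-2$, because $g$ vanishes on $\{-l,\dots,l-1\}$.
\item Its leading coefficient is $2\cdot 2l(2l+1)$, hence $c_l=2l(2l+1)$.
\item For $l\ge1$ the index shifts only bring in $g(l-1)$ and $g(l-2)$, which vanish. So there are no boundary terms, and $\Sigma_{s+1,l}=4\Sigma_{s,l}+2l(2l+1)\Sigma_{s,l-1}$.
\item For $l=0$ (where $c_0=0$) the shifts leave $\Sigma_{s+1,0}=4\Sigma_{s,0}+2\binom{2s+1}{s}$.
\item The closed form $R_{s,l}$ satisfies both recursions. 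After dividing by $R_{s,l-1}$, resp. $R_{s,0}$, each side equals $2l(2s+3)$, resp. $2(2s+3)/(s+1)$.
\end{itemize}

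Your argument is thus fully elementary (Pascal's rule and finite differences), at the price of a case split, boundary bookkeeping and a recursion check. The paper's argument avoids all bookkeeping and explains where the weights $\binom{2s+1}{s-j}(2j+1)$ come from: they are the Legendre coefficients of $x^s$.

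Two minor points. First, your initial rewriting $l!\,(s+1)_{s+1}\binom{s}{l}$ equals $\frac{(2s+1)!}{(s-l)!}$, so it is off by a factor $s!/l!$; your second form $l!\binom{2s+1}{s}\frac{(s+1)!}{(s-l)!}$ is correct. Second, your third fallback is not independent of the lemma. Expanding $\sin((2j+1)\phi)/\sin\phi$ in powers of $\sin^2\phi$ is just \eqref{tjAS} again, so it reproduces the double sum of \eqref{poli.d}.
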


\proof Introduce the shifted Legendre polynomials
\begin{equation*}
Q_j(x):=P_j(2x-1),
\end{equation*}
where $P_j$ is the usual Legendre polynomial. Then Rodrigues formula,
see 8.4.6 of \cite{AS64} gives
\begin{equation}
  \label{rod}
Q_j(x)=\frac{1}{j!}\frac{d^j}{dx^j}\bigl[x^j(x-1)^j\bigr].
\end{equation}
Furthermore, by the standard Legendre orthogonality relation (see 22.1.5 of \cite{AS64}) one has
\begin{equation*}
\int_0^1 Q_i(x)Q_j(x)\,dx=\frac{\delta_{ij}}{2j+1}.
\end{equation*}

Since $x^s$ is a polynomial of degree $s$, it can be expanded on the basis $\{Q_0,\ldots,Q_s\}$:
\begin{equation}
  \label{2.6}
x^s=\sum_{j=0}^{s}a_{s,j}Q_j(x).
\end{equation}
By orthogonality,
\begin{equation*}
a_{s,j}=(2j+1)\int_0^1 x^sQ_j(x)\,dx.
\end{equation*}
Using the Rodrigues formula \eqref{rod},
\begin{equation*}
\int_0^1x^sQ_j(x)\,dx
=\frac{1}{j!}\int_0^1x^s\frac{d^j}{dx^j}
\bigl[x^j(x-1)^j\bigr]dx.
\end{equation*}
Integrating by parts $j$ times, all boundary terms vanish because $x^j(x-1)^j$ has zeros of order $j$ at both endpoints. Hence
\begin{equation*}
\int_0^1x^sQ_j(x)\,dx
=\frac{(-1)^j}{j!}\frac{s!}{(s-j)!}
\int_0^1x^{s-j}x^j(x-1)^j\,dx
=\frac{s!}{j!(s-j)!}\int_0^1x^s(1-x)^j\,dx.
\end{equation*}
By Euler's beta integral formula (see \cite{AS64} 6.2.1 and 6.2.2) one has
\begin{equation*}
\int_0^1x^s(1-x)^j\,dx
=\frac{s!j!}{(s+j+1)!},
\end{equation*}
and therefore
\begin{equation*}
\int_0^1x^sQ_j(x)\,dx
=\frac{(s!)^2}{(s-j)!(s+j+1)!}.
\end{equation*}
It follows that
\begin{equation*}
a_{s,j}=(2j+1)\frac{(s!)^2}{(s-j)!(s+j+1)!}.
\end{equation*}
Substitution into \eqref{2.6} gives
\begin{equation}
  \label{2.14}
x^s=\sum_{j=0}^{s}(2j+1)
\frac{(s!)^2}{(s-j)!(s+j+1)!}Q_j(x).
\end{equation}
We differentiate both sides $l$ times and evaluate at $x=1$. The left-hand side gives
\begin{equation*}
\left.\frac{d^l}{dx^l}x^s\right|_{x=1}
=\frac{s!}{(s-l)!}.
\end{equation*}
 To compute $Q_j^{(l)}(1)$ we make the substitution $x=1+y$ and use
 \eqref{rod} which takes the form
\begin{equation*}
Q_j(1+y)=\frac{1}{j!}\frac{d^j}{dy^j}
\bigl[(1+y)^jy^j\bigr].
\end{equation*}
Expanding the product,
\begin{equation*}
(1+y)^jy^j=\sum_{r=0}^{j}\binom{j}{r}y^{j+r}.
\end{equation*}
one can compute the Taylor expansion 
\begin{equation*}
Q_j(1+y)=\frac{1}{j!}\sum_{r=0}^{j}
\binom{j}{r}\frac{(j+r)!}{r!}y^r.
\end{equation*}
Therefore, for $j\geq l$, comparison of the coefficient of $y^l$ yields
\begin{equation*}
Q_j^{(l)}(1)=\frac{(j+l)!}{l!(j-l)!}.
\end{equation*}
For $j<l$, one has of course $Q_j^{(l)}(1)=0$. Inserting in (the
$l$-th differential of) \eqref{2.14}, we obtain
\begin{equation*}
\frac{s!}{(s-l)!}
=\sum_{j=l}^{s}(2j+1)
\frac{(s!)^2}{(s-j)!(s+j+1)!}
\frac{(j+l)!}{l!(j-l)!}.
\end{equation*}
Multiplying both sides by $l!(2s+1)!/(s!)^2$ gives
\begin{equation*}
\sum_{j=l}^{s}
\frac{(2s+1)!(2j+1)(j+l)!}
{(s-j)!(s+j+1)!(j-l)!}
=\frac{l!(2s+1)!}{s!(s-l)!}.
\end{equation*}
which proves the claim. \qed

\noindent{\it Proof of Lemma \ref{prop:main-estimates}.}
First, denote $a:=1+\delta$.
For $k=0,1,2$ one has
\begin{equation}
 f^{(k)}(z)=(-1)^k\frac{s!}{(s-k)!} \int_0^1
 \frac{t^k(1-tz)^{s-k}}{\sqrt{1-t}}\,dt.
\label{eq:f-k-derivatives}
\end{equation}
We first prove the following auxiliary bound: for $k=0,1,2$ there exists
$C_{k,a}>0$ such that, uniformly for $0\le z\le a^2$,
\begin{equation}
 \int_0^1 \frac{t^k |1-tz|^{s-k}}{\sqrt{1-t}}\,dt
 \le \frac{C_{k,a}}{(1+sz)^{k+1}}.
\label{eq:aux-bound}
\end{equation}

We split the integral into $[0,1/2]$ and $[1/2,1]$. On $0\le t\le 1/2$, since
$z\le a^2<2$, one has $0\le tz<1$. Hence
\begin{equation*}
 |1-tz|=1-tz\le e^{-tz}.
\end{equation*}
Thus, for $k=0,1,2$,
\begin{equation*}
 \int_0^{1/2}\frac{t^k |1-tz|^{s-k}}{\sqrt{1-t}}\,dt
 \le \sqrt{2}\int_0^{1/2} t^k e^{-(s-k)tz}\,dt
 \le \frac{C_k}{(1+sz)^{k+1}}.
\end{equation*}

On $1/2\le t\le 1$, set
\begin{equation}
 c_a:=\min\left\{\frac12,\frac{2-a^2}{a^2}\right\}>0.
\label{eq:def-ca}
\end{equation}
Then
\begin{equation*}
 |1-tz|\le 1-c_a z\leq e^{-c_az},
 \qquad 0\le z\le a^2,
 \quad \frac12\le t\le 1.
\end{equation*}
Indeed, $1-tz\le 1-z/2$, while $1-tz\ge 1-z$, and therefore
$|1-tz|\le \max\{1-z/2,z-1\}$. The choice of $c_a$ in \eqref{eq:def-ca}
implies that both terms are bounded by $1-c_a z$. Consequently,
\begin{equation*}
 |1-tz|^{s-k}\le e^{-c_a(s-k)z},
\end{equation*}
so that
\begin{equation*}
 \int_{1/2}^{1}\frac{t^k |1-tz|^{s-k}}{\sqrt{1-t}}\,dt
 \le 2 e^{-c_a(s-k)z}
 \le \frac{C_{k,a}}{(1+sz)^{k+1}}\ .
\end{equation*}
Then \eqref{eq:aux-bound} follows.

Using \eqref{eq:f-k-derivatives} and \eqref{eq:aux-bound}, we obtain the pointwise estimates
\begin{equation}
 |f(z)|\le \frac{C_a}{1+sz}\ ,
\label{eq:f-estimate}
\end{equation}
\begin{equation}
 |f'(z)|\le C_a\frac{s}{(1+sz)^2}\ ,
\label{eq:fprime-estimate}
\end{equation}
\begin{equation}
 |f''(z)|\le C_a\frac{s^2}{(1+sz)^3}\ ,
\label{eq:fsecond-estimate}
\end{equation}
valid for $0\le z\le a^2$.

Then exploiting the definition of $F$ one gets 
\begin{equation}
 |F(y)|\le C_a\frac{|y|}{1+s y^2}\ .
\label{eq:F-basic-bound}
\end{equation}
If $u=\sqrt{s}|y|$, then
\begin{equation*}
 \frac{|y|}{1+s y^2}=\frac{1}{\sqrt{s}}\frac{u}{1+u^2}
 \le \frac{C}{\sqrt{s}}\ ,
\end{equation*}
and this proves \eqref{eq:F-estimate}.

Similarly, using \eqref{eq:f-estimate} and \eqref{eq:fprime-estimate},
\begin{equation*}
 |F'(y)|=\left|f(y^2)+2y^2 f'(y^2)\right|\le C_a\left[\frac{1}{1+s y^2}
 +\frac{s y^2}{(1+s y^2)^2}\right].
\end{equation*}
The two factors in brackets are uniformly bounded for $y\in\mathbb R$ and
$s\ge 1$, hence \eqref{eq:Fprime-estimate} follows.

Finally, by \eqref{eq:fprime-estimate} and \eqref{eq:fsecond-estimate},
\begin{equation*}
 |F''(y)|=\left|6y f'(y^2)+4y^3 f''(y^2)\right|\le C_a\left[\frac{s|y|}{(1+s y^2)^2}
 +\frac{s^2|y|^3}{(1+s y^2)^3}\right].
\end{equation*}
With $u=\sqrt{s}|y|$, the two terms become
\begin{equation*}
 \frac{s|y|}{(1+s y^2)^2}
 =\sqrt{s}\frac{u}{(1+u^2)^2},
\quad  \frac{s^2|y|^3}{(1+s y^2)^3}
 =\sqrt{s}\frac{u^3}{(1+u^2)^3}.
\end{equation*}
Both are rational functions of $u$ are bounded on $[0,\infty)$, and hence
\eqref{eq:Fsecond-estimate} follows.

It remains to estimate the primitive. By \eqref{eq:F-basic-bound}, for
$y\in(-\delta,1+\delta)$,
\begin{equation*}
 |G(y)|\le C_a\int_0^{|y|}\frac{u}{1+s u^2}\,du.
\end{equation*}
Since $|y|\le a$, one has
\begin{equation}
 \int_0^{|y|}\frac{u}{1+s u^2}\,du
 =\frac{1}{2s}\log(1+s y^2)
 \le \frac{C_a\log(2+s)}{s}.
\label{eq:G-log-bound}
\end{equation}
This proves \eqref{eq:G-estimate}. 
\qed

\bibliographystyle{alpha}
\bibliography{Toda_vs_FPU.bib}

\end{document}